\documentclass{article}
\usepackage[english]{babel}
\usepackage[letterpaper,top=2cm,bottom=2.5cm,left=3cm,right=3cm,marginparwidth=1.75cm]{geometry}
\usepackage{enumitem}
\usepackage{amsmath}
\usepackage{graphicx}
\usepackage{caption}
\usepackage{tikz}
\usepackage{subcaption}
\usepackage[colorlinks=true, allcolors=blue]{hyperref}

\usepackage{amsthm,amsfonts,bbm,bm}
\newtheorem{theorem}{Theorem}

\newtheorem{lemma}[theorem]{Lemma}

\usepackage{cite}

\setlength{\parindent}{0pt}
\setlength{\parskip}{10pt}

\newcommand{\cC}{\mathcal{C}}
\newcommand{\ie}{{i.e., }}
\newcommand{\eg}{{e.g., }}

\title{Modeling, Inference, and Prediction in Mobility-Based Compartmental Models for Epidemiology}
\author{Ning Jiang\footnote{Department of Mathematics and Statistics, University of Massachusetts Amherst, Amherst, Massachusetts, the United States, 01003}, \quad Weiqi Chu$^*$, \quad Yao Li$^*$}
\date{}

\begin{document}
\maketitle

\begin{abstract}

Classical compartmental models in epidemiology often assume a homogeneous population for simplicity, which neglects the inherent heterogeneity among individuals. This assumption frequently leads to inaccurate predictions when applied to real-world data. For example, evidence has shown that classical models overestimate the final pandemic size in the H1N1-2009 and COVID-19 outbreaks. To address this issue, we introduce individual mobility as a key factor in disease transmission and control. We characterize disease dynamics using mobility distribution functions for each compartment and propose a mobility-based compartmental model that incorporates population heterogeneity. Our results demonstrate that, for the same basic reproduction number, our mobility-based model predicts a smaller final pandemic size compared to the classical models, effectively addressing the common overestimation problem. Additionally, we infer mobility distributions from the time series of the infected population. We provide sufficient conditions for uniquely identifying the mobility distribution from a dataset and propose a machine-learning-based approach to learn mobility from both synthesized and real-world data.
\end{abstract}

\section{Introduction}
Understanding and predicting epidemic dynamics are essential for effective public health planning and response~\cite{shaman2012forecasting,holmdahl2020wrong,viboud2018rapidd}. Reliable and effective models provide crucial insights that enable timely implementation of control measures, efficient resource allocation, and strategic planning. For example, during the COVID-19 pandemic, models by Ferguson et al.~\cite{ferguson2020report} and IHME~\cite{covid2021modeling} guided decisions on school closures, social distancing, mask mandates, and reopening. Epidemiological models significantly influenced policy and played a key role in saving lives.

Among all these models, compartmental models have served as a cornerstone in epidemiology since the early 20th century. These compartmental models, including the well-known SIR (Susceptible-Infected-Recovered) model introduced by Kermack and McKendrick~\cite{kermack1927contribution} in 1927, divide the individuals into compartments based on their disease status and use ordinary differential equations to describe the dynamic evolution of each compartment.
Despite their utility, classical compartmental models have significant limitations~\cite{keeling2011modeling,anderson1991infectious}. A major issue is their assumption of a homogeneous mixing population, where each individual has an equal chance of interacting with any other. This overlooks spatial dynamics~\cite{ferguson2005strategies,yang2024epidemical}, real-world contact patterns~\cite{lloyd2005superspreading,nishiura2011did,chowell2007comparative}, and dynamic changes in human behavior during epidemics~\cite{farboodi2021internal,funk2010modelling}, such as increased handwashing, social distancing, and vaccination, which can significantly affect epidemic outcomes. The COVID-19 pandemic further highlights that classical models often ignore behavioral changes and interventions, leading to inflated estimates of susceptible populations.

To address these limitations, we propose mobility-based compartmental models incorporating heterogeneous mobility distributions. These models describe the time evolution of population densities for each compartment as a function of a mobility variable ranging from $[0, 1]$, making them infinite-dimensional. While the literature has explored similar infinite-dimensional models for their basic mathematical properties, to the best of our knowledge, this paper addresses critical topics in epidemiology such as the estimation of basic reproduction number and final pandemic size and the identifiability of the mobility distribution, which have not been previously studied.

A notable innovation in our paper is the introduction of individual mobility, a key factor in infectious disease spread. During an outbreak, individuals with higher mobility face greater risks of infection and transmission early on. As the pandemic progresses, this high-risk group recovers and indirectly protects the low-mobility group, thereby slowing disease spread and limiting the overall pandemic size. Classical ODE-based compartmental models cannot capture this mechanism. We prove that for the same basic reproduction number $\mathcal{R}_0$, a mobility-based SIR model predicts a smaller final pandemic size compared to a classical SIR model, addressing the common overestimation issue in the literature.

Another key contribution of this paper is the inference of mobility distribution from real-life data. For accurate forecasts and to match observed data, it is crucial to use a mobility distribution from real-world measurements. However, obtaining reliable mobility data is challenging due to large population sizes, privacy constraints, and inherent variability in individual movement patterns. 
We propose a method to infer the mobility distribution from accessible pandemic data, specifically the ratio of the infected population $I(t)$. We prove a one-to-one correspondence between the mobility distribution and the time series of $I(t)$, implying that the inference problem admits a unique solution. We also construct a training set of mobility distributions using Gaussian mixtures and employ an artificial-neural-network approach to identify the mobility distribution from real-life data.

Our paper proceeds as follows. In section \ref{sec: modeling}, we introduce a mobility-based compartmental model and emphasize its distinctions from homogeneous models, particularly in terms of the basic reproduction number and the final pandemic size. In section \ref{sec: learning}, we formulate the inverse problem to infer the mobility distribution from the time series of the infected population. We prove this inverse problem has a unique solution and propose a machine-learning approach to infer the mobility distribution. We conclude in section \ref{sec: conclusion}.

\section{Model description and epidemiological parameters} \label{sec: modeling}

\subsection{A mobility-based SIRS model}

The baseline SIRS model categorizes agents into three compartments: susceptible (S), infected (I), and recovered (R). Consider a system with $N$ agents, in which agent $i$ holds a disease state $X_i$ that takes a value of S, I, or R. We define $S(t)$, $I(t)$, and $R(t)$ as the ratios of agents in the three states:
\begin{equation}
S(t) = \frac1N\sum_{i = 1}^{N} \mathbf{1}_{ \{X_i(t) = \text{S}\} }, \quad I(t) = \frac1N\sum_{i = 1}^{N} \mathbf{1}_{ \{X_i(t) = \text{I}\} }, \quad R(t) = \frac1N\sum_{i = 1}^{N} \mathbf{1}_{ \{X_i(t) = \text{R}\}},
\end{equation}
where $\mathbf{1}$ is the indicator function.
The baseline SIRS model assumes individuals cycling between the S, I, and R states through infection, recovery, and loss of immunity. The baseline SIRS model without birth and death reads as
\begin{equation} \label{eq: SIRS}
    \begin{aligned}
    \dot{S}(t) &= -{\beta} S(t)I(t) + \varepsilon R(t), \\
    \dot{I}(t) &= {\beta} S(t)I(t) - \gamma I(t), \\
    \dot{R}(t) &= \gamma I(t) - \varepsilon R(t),
    \end{aligned}
\end{equation}
where $\beta$ is the infection rate, $\gamma$ is the recovery rate, and $\varepsilon$ is the loss of immunity rate.

Mobility is a key factor in epidemics, influencing the spread of infection across regions through travel, commuting, and migration. We let each agent have a time-independent mobility $m_i \in (0, 1)$, which determines their infection and transmission rates. For any S-I pair (\ie a susceptible agent $i$ and an infected agent $j$), the probability of infection is given by $\beta m_i m_j$.
Infected agents recover at a constant rate $\gamma$, and recovered agents lose immunity at a constant rate $\varepsilon$. Figure \ref{fig: SIR_diagram} illustrates the transitions between the three compartments and their respective rates.
\begin{figure}[htp]
    \centering
    \includegraphics[width=0.33\textwidth]{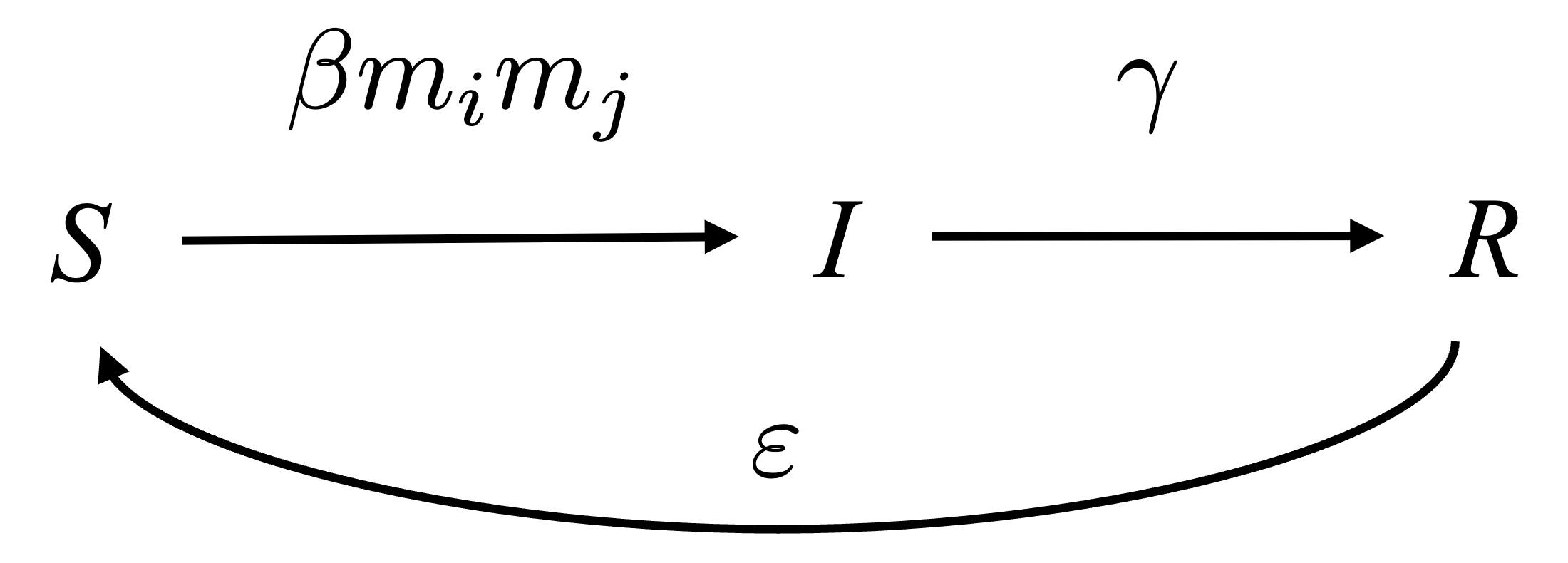}
    \caption{A schematic diagram to show the transition between three compartments and the transition rates in the mobility-based compartments models.}
    \label{fig: SIR_diagram}
\end{figure}

We describe the dynamics using three population density functions of mobility---$S(m,t)$, $I(m,t)$, $R(m,t)$---associated with the susceptible, infected, and recovered compartments, respectively. Here, $S(m,t)\mathrm{d}m$ represents the ratio of susceptible agents with mobility in the interval $[m,m+\mathrm{d}m)$. 
Let $f$ be the mobility distribution function of the whole population and we assume $f$ is time-independent. 
Due to population conservation, we have
\begin{equation} \label{eq: SIRf}
    S(m,t) + I(m,t) + R(m,t) = f(m), \quad \text{for all} ~ t \ge 0.
\end{equation}
While individuals with different mobility can interact in numerous ways, we assume that interactions between susceptible and infected individuals still follow the mass-action principle, as in the classical SIR model.
We assume that the probability of infection between an S-I pair is proportional to the product of their mobilities.
Therefore, the corresponding mobility-based SIRS model is
\begin{subequations} \label{eq: densitySIRS}
    \begin{align}
    \frac{\partial}{\partial t} S(m,t) &= -{\beta} \int_0^1 m\bar{m}S(m,t)I(\bar{m},t)~ \mathrm{d}\bar{m} +\varepsilon R(m,t), \\
    \frac{\partial}{\partial t} I(m,t) &= {\beta} \int_0^1 m\bar{m}S(m,t)I(\bar{m},t)~ \mathrm{d}\bar{m} - \gamma I(m,t), \label{eq: Im_eq} \\
    \frac{\partial}{\partial t} R(m,t) &= \gamma I(m,t) - \varepsilon R(m,t).
    \end{align}
\end{subequations}
In the mobility-based SIRS model, a susceptible agent with mobility $m$ can be infected by any infected agent with mobility $\bar{m}$, leading to an integral representation of the internal infection process. In our model, each agent's mobility is time-independent, resulting in local terms for the recovery and loss of immunity processes.

If all agents have the same mobility $m_0$, then the mobility distribution is a Dirac delta function centered at $m_0$ (\ie $f(m)=\delta(m-m_0)$). The mobility-based SIRS model \eqref{eq: densitySIRS} becomes the baseline SIRS model \eqref{eq: SIRS} with the infection rate $\beta_\text{eff}=\beta m_0^2$, where we refer $\beta_\text{eff}$ to as the effective infection rate.

\subsection{Dynamics of the mobility-based SIRS model}

In Figure \ref{fig: mobility-based SIRS-dynamics}, we illustrate the time evolution of the mobility-based SIRS model \eqref{eq: densitySIRS} during the first infection wave. Initially, the population is evenly distributed across the mobility domain $[0,1]$. As the simulation begins, susceptible agents with high mobility are infected first, leading to a peak in the infected population density near $m=1$ at $t=250$. Over time, the infected population recovers, resulting in an increased density in $R(m,t)$. This infection-recovery process occurs sooner in the high-mobility region compared to the low-mobility region.
\begin{figure}[hpt]
    \centering
    \begin{subfigure}{0.3\textwidth}
        \caption{$S(m,t)$}
        \includegraphics[width=\hsize]{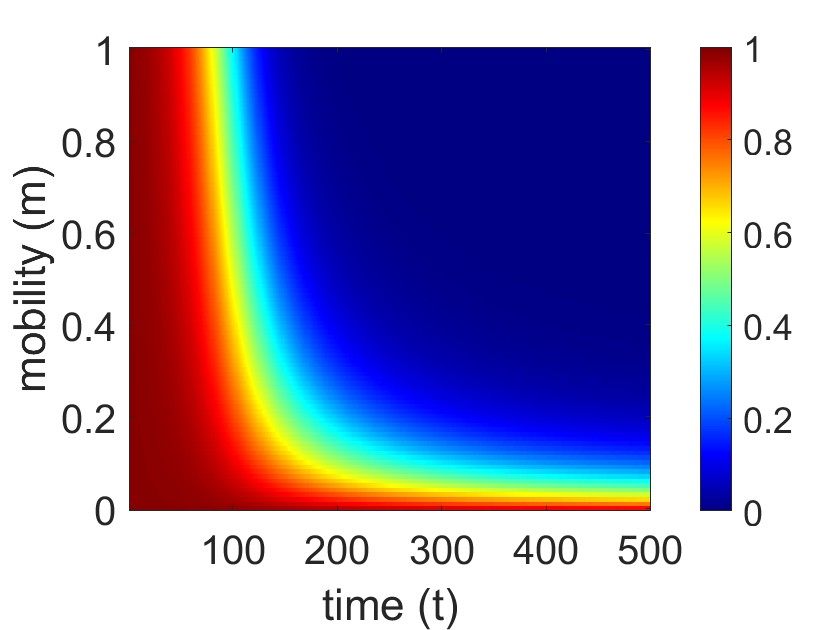}
    \end{subfigure}
    \hfill
    \begin{subfigure}{0.3\textwidth}
        \caption{$I(m,t)$}
        \includegraphics[width=\hsize]{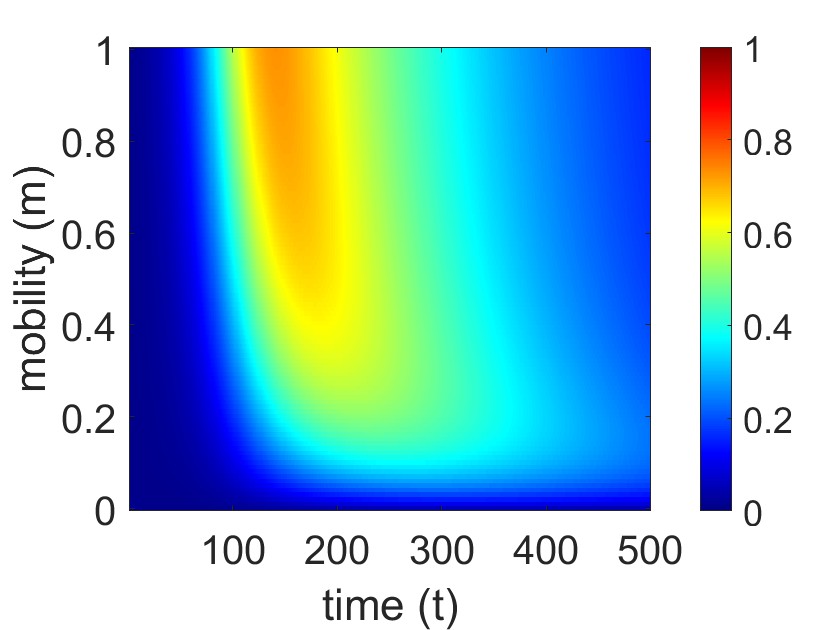}
    \end{subfigure}
    \hfill
    \begin{subfigure}{0.3\textwidth}
        \caption{$R(m,t)$}
        \includegraphics[width=\hsize]{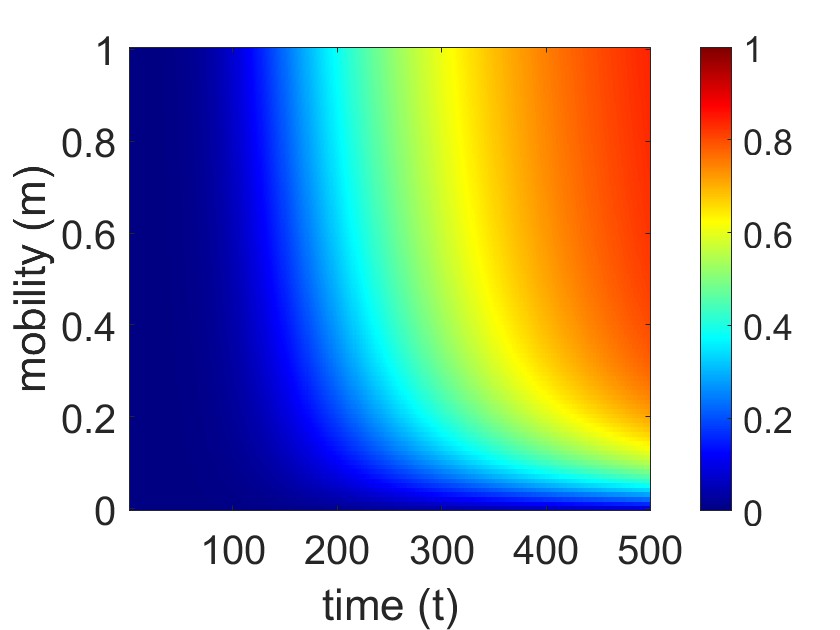}
    \end{subfigure}
    \caption{Evolution dynamics of the mobility-based SIRS model \eqref{eq: SIRf}, with parameter $\beta=0.1,\varepsilon=0.01$, $\gamma=0.3$ and initial $S(m,0)=0.99,I(m,0)=0.01,R(m,0)=0$ for (a) the susceptible population $S(m,t)$, (b) the infected population $I(m,t)$, and (c) the recovered population $R(m,t)$.}
    \label{fig: mobility-based SIRS-dynamics}
\end{figure}

The initial mobility distribution plays a crucial role in determining the mobility heterogeneity of the population as well as shaping the aggregated dynamics. We define the short-hand notation for the integral
\begin{equation}
    \langle g \rangle(t) := \int_0^1 g(m,t) ~\mathrm{d}m,
\end{equation}
and illustrate the time evolution of the ratios of three compartments in Figure \ref{fig: SIR-dynamics}. We use the initial conditions 
\begin{equation} \label{eq: homogeneous_initial}
S(m,0) = (1-I_0)f(m), \quad I(m,0) = I_0f(m), \quad R(m,0) = 0,
\end{equation} 
which we refer to as the proportional initial condition. 
To demonstrate the effect of heterogeneous mobility, we choose three initial mobility distributions $f_1$, $f_2$, and $f_3$, where $f_1$ and $f_2$ are restricted normal distributions on $[0,1]$ with 
\begin{equation}\label{eq: initial}
f_1 \propto 0.8\mathcal{N}(0.1, 0.02)+0.2\mathcal{N}(0.95,0.02), \quad  f_2 \propto \mathcal{N}(0.2,0.365),
\end{equation}
and $f_3$ is a Direct delta function $f_2(m)=\delta(m-0.435)$. All three mobility distributions have the same second-order moment $\langle m^2f\rangle$, which, as we will show in section \ref{sec: R0}, is proportional to the basic reproduction number. The mobility distribution $f_1$ is a mixture of two restricted Gaussian distributions and has two bumps centered at $m=0.1$ and $m=0.95$, indicating a polarized distribution of mobility among the population. The mobility distribution $f_3$ corresponds to the baseline SIRS model \eqref{eq: SIRS} in which everyone has the same mobility.

\begin{figure}[hpt]
    \centering
    \begin{subfigure}{0.3\textwidth}
        \caption{$\langle S\rangle(t)$}
        \includegraphics[width=\hsize]{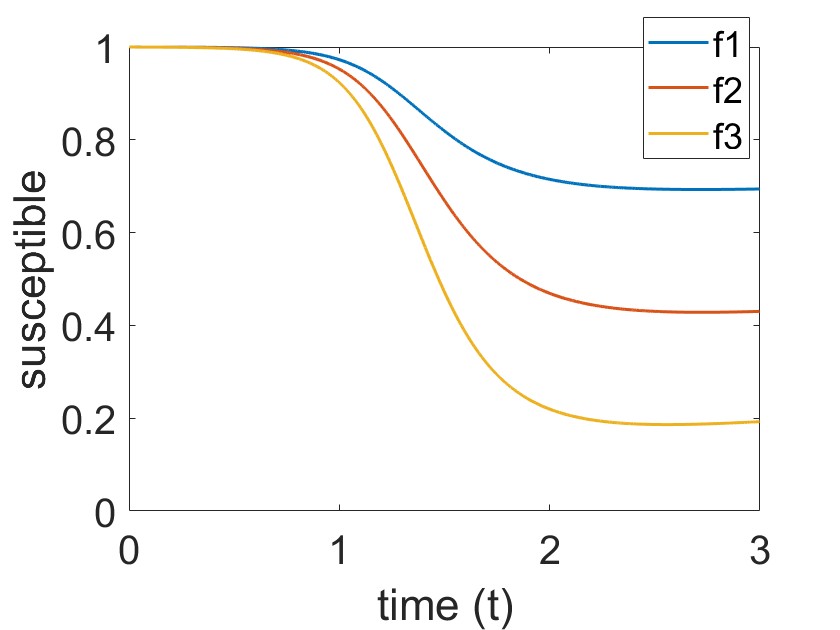}
    \end{subfigure}
    \hfill
    \begin{subfigure}{0.3\textwidth}
        \caption{$\langle I\rangle(t)$}
        \includegraphics[width=\hsize]{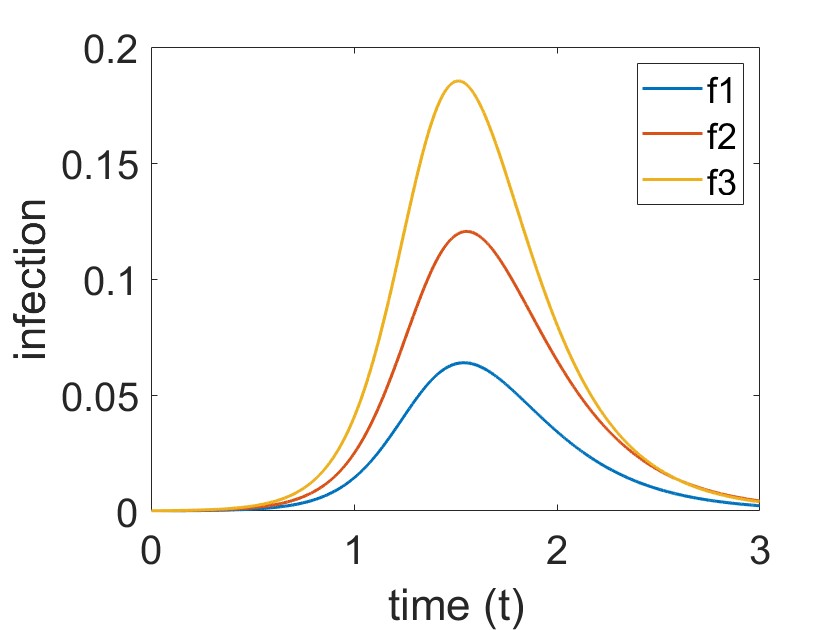}
    \end{subfigure}
    \hfill
    \begin{subfigure}{0.3\textwidth}
        \caption{$\langle R\rangle(t)$}
        \includegraphics[width=\hsize]{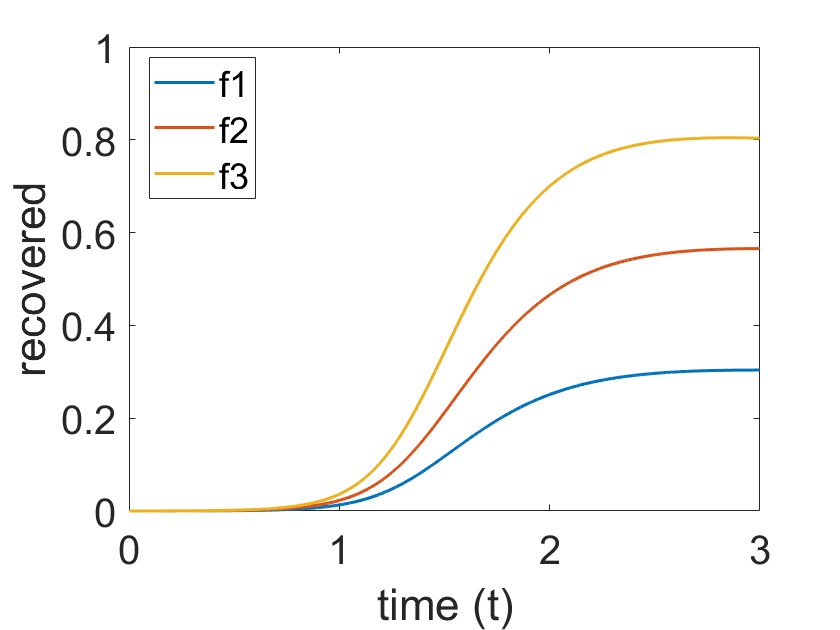}
    \end{subfigure}
    \caption{Time evolution of the ratios of three compartments in the mobility-based SIRS model \eqref{eq: SIRf}, with parameters $\beta=1.5$, $\gamma=0.13$, $\varepsilon=0.001$ and the proportional initial condition with $I_0=1\text{e-}4$ in \eqref{eq: homogeneous_initial}.}
    \label{fig: SIR-dynamics}
\end{figure}

We observe from Figure \ref{fig: SIR-dynamics} that at the end of the pandemic, over $70 \%$ of the population are infected for the homogeneous population case $f_3$. 
In contrast, $f_1$ has polarized mobility distribution with a relatively small group of high-mobile individuals. When high-mobile individuals recover from infection, they can provide protection to uninfected low-mobile groups. 
Figure \ref{fig: SIR-dynamics} shows that the final pandemic size regarding $f_1$ is only about $1/3$ of that of the baseline model. We derive and discuss the final pandemic size in detail in section \ref{sec: final_size}.

\subsection{Basic reproduction number} \label{sec: R0}

The basic reproduction number $\mathcal{R}_0$ is a key epidemiological metric that represents the average number of secondary infections produced by a single infected individual in a fully susceptible population. In this section, we derive the basic reproduction number $\mathcal{R}_0$ of the mobility-based SIRS model \eqref{eq: densitySIRS} by using the next-generation operator (\ie threshold operator) framework~\cite{thieme2011global}, which is essentially the infinite-dimensional version of the next generation matrix method~\cite{diekmann1990definition}. 

Recall the definition of $\mathcal{R}_0$ in \cite{thieme2011global}. For a generic infinite-dimensional SIRS model, suppose the infected population is defined on a compact set $\Omega$ and satisfies
\begin{equation}
\frac{\partial I(x,t)}{\partial t} = \int_\Omega h(x,y,S(x,t),I(y,t))~\mathrm{d}y - v(x) I(x,t), 
\end{equation}
where $h(x,y,S,I)$ is an incidence function and $v(x)$ is the recovery rate function. We define the threshold operator $\mathcal{L}$ as
\begin{equation}
\mathcal{L} u(x) = \int_\Omega \frac{u(y)}{v(x)} \left. \frac{\partial h(x,y,S,I)}{\partial I}\right|_{S=N(x),I=0} ~\mathrm{d}y,
\end{equation}
where $N(x)$ is the total population density, then the basic reproduction number $\mathcal{R}_0$ is the spectral radius of $\mathcal{L}$, which means
\begin{equation}
    \mathcal{R}_0 = \lim_{n \rightarrow \infty} \| (\mathcal{L})^n\|^{1/n}.
\end{equation}
We refer~\cite{thieme2011global} for detailed assumptions on $h$ and $v$.

For the mobility-based SIRS model defined in \eqref{eq: densitySIRS}, the total population density is $f(m)$ and the threshold operator $\mathcal{L}^f$ is 
\begin{align}
    \label{thresop}
    \mathcal{L}^f u(m) = \frac{\beta}{\gamma}m f(m)\int_0^1 \bar{m} u(\bar{m}) \mathrm{d}\bar{m} .
\end{align}
Therefore, the basic reproduction number $\mathcal{R}_0=\lim_{n \rightarrow \infty} \| (\mathcal{L}^f)^n\|^{1/n}$, and we compute $\mathcal{R}_0$ in Theorem \ref{prop: R0}.

\begin{theorem} \label{prop: R0}
    The basic reproduction number $\mathcal{R}_0$ of the mobility-based SIRS model \eqref{eq: densitySIRS} is 
    \begin{equation}
        \label{eq: R0}
            \mathcal{R}_0 = \frac{\beta}{\gamma}\langle m^2 f \rangle = \frac{\beta}{\gamma} \int_0^1 m^2 f(m) ~\mathrm{d}m .
    \end{equation}
\end{theorem}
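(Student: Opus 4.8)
\textit{Proof proposal.} The plan is to exploit the fact that the threshold operator $\mathcal{L}^f$ in \eqref{thresop} is a rank-one (degenerate) integral operator, so all of its iterates can be written in closed form and the limit $\mathcal{R}_0=\lim_{n\to\infty}\|(\mathcal{L}^f)^n\|^{1/n}$ can be evaluated directly, bypassing any spectral-theoretic machinery. First I would fix a Banach space on which $\mathcal{L}^f$ acts — say $X=L^1([0,1])$ (or $C([0,1])$; the choice is whatever is dictated by the hypotheses of \cite{thieme2011global}) — and record that $\mathcal{L}^f$ is a bounded positive linear operator there, since $f\in L^1$ and the kernel $\tfrac{\beta}{\gamma}m\bar m f(m)$ is nonnegative and integrable. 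I would then introduce $\phi(m):=\tfrac{\beta}{\gamma}\,m f(m)$ and the bounded functional $\Lambda u:=\int_0^1 \bar m\,u(\bar m)\,\mathrm{d}\bar m$, so that $\mathcal{L}^f u=(\Lambda u)\,\phi$.

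Next I would compute the iterates: since $\mathcal{L}^f u$ is always a scalar multiple of $\phi$, an induction gives $(\mathcal{L}^f)^n u=\lambda^{\,n-1}(\Lambda u)\,\phi$ for $n\ge 1$, where
\[
\lambda:=\Lambda\phi=\int_0^1 m\cdot\frac{\beta}{\gamma}\,m f(m)\,\mathrm{d}m=\frac{\beta}{\gamma}\int_0^1 m^2 f(m)\,\mathrm{d}m=\frac{\beta}{\gamma}\langle m^2 f\rangle\ \ge 0 .
\]
Consequently $\|(\mathcal{L}^f)^n\|=\lambda^{\,n-1}\,\|\Lambda\|\,\|\phi\|$ (so $\lambda$ is in fact an eigenvalue of $\mathcal{L}^f$, with eigenfunction $\phi=m f(m)$, whenever $\lambda>0$), and taking $n$-th roots yields $\|(\mathcal{L}^f)^n\|^{1/n}=\lambda^{(n-1)/n}(\|\Lambda\|\,\|\phi\|)^{1/n}\to\lambda$ as $n\to\infty$. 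By the definition quoted from \cite{thieme2011global}, this limit is exactly $\mathcal{R}_0$, giving \eqref{eq: R0}. (In the degenerate case $\lambda=0$, i.e. $mf(m)\equiv 0$, the operator is nilpotent of order two and $\mathcal{R}_0=0$, still consistent with the formula.)

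I do not expect a serious obstacle in the computation; the care that is needed is at the functional-analytic bookkeeping level. Specifically, the main thing to check is that the chosen space and the operator $\mathcal{L}^f$ meet the standing assumptions of the next-generation-operator framework of \cite{thieme2011global} — boundedness and positivity certainly, and, if one wants the full threshold interpretation of $\mathcal{R}_0$ (stability below $1$, instability above $1$) rather than merely its numerical value, also compactness and irreducibility of $\mathcal{L}^f$, the latter requiring $f>0$ a.e. I would also double-check the derivation of \eqref{thresop} from \eqref{eq: densitySIRS}: with incidence $h(m,\bar m,S,I)=\beta m\bar m S(m)I(\bar m)$ and recovery-rate function $v(m)\equiv\gamma$ one has $\left.\partial_I h\right|_{S=f,I=0}=\beta m\bar m f(m)$, hence $\mathcal{L}^f u(m)=\int_0^1 \tfrac{u(\bar m)}{\gamma}\beta m\bar m f(m)\,\mathrm{d}\bar m$ as claimed. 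Everything beyond this is the rank-one computation above.
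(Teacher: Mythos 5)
Your proposal is correct, and it rests on the same key observation as the paper's proof: $\mathcal{L}^f$ is rank one, its range is spanned by $\phi(m)=\tfrac{\beta}{\gamma}mf(m)$, and applying $\Lambda u=\int_0^1 \bar m u(\bar m)\,\mathrm{d}\bar m$ to $\phi$ produces the number $\tfrac{\beta}{\gamma}\langle m^2 f\rangle$. Where you differ slightly is in how you close the argument: the paper invokes compactness of the finite-rank operator, identifies the spectral radius with the maximal eigenvalue, and reads off the unique nonzero eigenvalue from the eigenfunction $mf(m)$; you instead compute the iterates explicitly, $(\mathcal{L}^f)^n u=\lambda^{n-1}(\Lambda u)\phi$ with $\lambda=\tfrac{\beta}{\gamma}\langle m^2 f\rangle$, so that $\|(\mathcal{L}^f)^n\|^{1/n}=\lambda^{(n-1)/n}(\|\Lambda\|\,\|\phi\|)^{1/n}\to\lambda$, evaluating the defining limit for $\mathcal{R}_0$ directly. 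Your route is marginally more self-contained (no appeal to spectral theory of compact operators, and it cleanly handles the degenerate case $\lambda=0$), while the paper's phrasing makes the eigenvalue/eigenfunction structure explicit, which is what later feeds into the stability discussion via the threshold framework of the cited reference; both are complete and the functional-analytic caveats you flag (choice of space, verification of the hypotheses on $h$ and $v$) apply equally to either version.
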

\begin{proof}
Recall the definition of the threshold operator $\mathcal{L}^f$ in \eqref{thresop}. For any function $u(m)$, the image $\mathcal{L}^f u(m)$ lies in a 1-dimensional function space spanned by $mf(m)$. Consequently, $\mathcal{L}^f$ is a finite-rank operator, and thus, it must be compact. Therefore, the spectral radius of $\mathcal{L}^f$ is its maximal eigenvalue.

Because $\mathcal{L}^f$ is a rank-one operator, $\mathcal{L}^f$ only has 1 nonzero eigenvalue and this nonzero eigenvalue corresponds to the eigenfunction $mf(m)$. We compute this eigenvalue by applying $\mathcal{L}^f$ to $mf(m)$ and obtaining
\begin{equation}
\mathcal{L}^f( m f(m)) = \frac{\beta}{\gamma} m f(m)\int_0^1 \bar{m}^2 f( \bar{m}) \mathrm{d} \bar{m} = \frac{\beta}{\gamma}\langle m^2 f \rangle m f(m).
\end{equation}
Therefore, the leading eigenvalue of $\mathcal{L}^f$ is $\frac{\beta}{\gamma}\langle m^2 f \rangle$ and it is also the basic reproduction number $\mathcal{R}_0$. 

\end{proof}

The basic reproduction number, $\mathcal{R}_0$, is a key dimensionless epidemiological parameter that represents the average number of secondary cases generated by one primary case. When $\mathcal{R}_0 > 1$, the disease-free equilibrium becomes unstable, allowing the disease to persist in the population. Conversely, if $\mathcal{R}_0 < 1$, the disease-free equilibrium is asymptotically stable, leading to the eventual extinction of the disease.
The stability of both disease-free and endemic equilibria has been extensively studied over the past decades. For the original next-generation matrix method applied to finite-dimensional compartmental models, see~\cite{diekmann1990definition, diekmann2010construction}, and for various infinite-dimensional generalizations, refer to~\cite{thieme2009spectral, thieme2011global, qiu2018global}. By applying Theorems 3.16 and 3.17 in~\cite{thieme2009spectral}, one can readily show that the disease-free equilibrium $(f(m), 0, 0)$ of our model \eqref{eq: densitySIRS} is unstable when $\mathcal{R}_0 > 1$ and asymptotically stable when $\mathcal{R}_0 < 1$.

When the entire population has a constant mobility $m_0$, equivalent to the baseline SIRS model \eqref{eq: SIRS}, the basic reproduction number in \eqref{eq: R0} becomes
\begin{equation}
\mathcal{R}_0=\frac{\beta m_0^2}{\gamma}=\frac{\beta_\text{eff}}{\gamma},
\end{equation}
which is the same as the basic reproduction number of the baseline SIRS model \eqref{eq: SIRS} with the infection rate $\beta_\text{eff}$. 

Let $\text{var}(f)$ denote the variance of a mobility distribution $f$. In practice, we often estimate $\mathcal{R}_0$ by considering the early growth rate of cases and performing contact tracing studies, which is independent of models. 
Assuming that all distributions that have the same $\mathcal{R}_0$, from \eqref{eq: R0} we have
\begin{equation}
    \langle mf\rangle^2 = \frac{\gamma}{\beta}\mathcal{R}_0  - \text{var}(f) \le \frac{\gamma}{\beta}\mathcal{R}_0 = \langle mf_{\delta}\rangle^2.
\end{equation}
where $f_\delta$ is the Dirac delta distribution that has the same second-order moment as $f$. Therefore, the mean mobility $\langle mf_{\delta}\rangle$ of the Dirac delta distribution $f_\delta$ is larger than that of any other distribution $f$. 
Once $\mathcal{R}_0$ is determined first (using a model-free method), the homogeneous model (\eg the baseline SIRS model) overestimates the mean mobility, which consequently leads to an overestimation of the final pandemic size.

\subsection{Mean mobility}\label{sec: mean_mobility}
Mobility plays a crucial role in the mobility-based compartmental models. We focus on the mobility-based SIR model, which is defined by the following equations,
\begin{subequations} \label{eq: densitySIR}
    \begin{align}
    \frac{\partial}{\partial t} S(m,t) &= -{\beta} \int_0^1 m\bar{m}S(m,t)I(\bar{m},t) ~\mathrm{d}\bar{m} \label{eq: SIR_S}, \\
    \frac{\partial}{\partial t} I(m,t) &= {\beta} \int_0^1 m\bar{m}S(m,t)I(\bar{m},t)~ \mathrm{d}\bar{m} - \gamma I(m,t) \label{eq: SIR_I}, \\
    \frac{\partial}{\partial t} R(m,t) &= \gamma I(m,t), \label{eq: SIR_R} 
    \end{align}
\end{subequations}
and examine the mean mobility of the susceptible compartment. Let $Q_k^\varphi$ be the $k$th moment with respect to a nonnegative function $\varphi(m,t)$, which means 
\begin{equation}\label{eq: moments}
Q_k^\varphi(t)=\int_0^1 m^k\varphi(m,t) ~\mathrm{d}m.
\end{equation}
Under this definition, the ratio of the susceptible population is $Q_0^S(t)$, and the mean mobility of the susceptible compartment is $Q_1^S(t)/Q_0^S(t)$. 
We show the monotone property associated with the susceptible population in Theorem \ref{thm: moment_decrease}.

\begin{theorem}\label{thm: moment_decrease}
Let ${Q_k^S}(t)$ be the $k$th moment of $S(m,t)$ in the mobility-based SIR model \eqref{eq: densitySIR}. If $f$ is not a Dirac delta distribution, then for all $k\ge0$, the ratio ${Q_{k+1}^S}(t)/{Q_k^S}(t)$ strictly decreases with respect to time $t$.
\end{theorem}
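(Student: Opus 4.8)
The plan is to first decouple the $S$-equation. Writing $J(t):=Q_1^I(t)=\int_0^1 \bar m\,I(\bar m,t)\,\mathrm{d}\bar m$, equation \eqref{eq: SIR_S} becomes, for each fixed $m$, the linear ODE $\partial_t S(m,t)=-\beta m J(t)S(m,t)$, which integrates to
\begin{equation}
S(m,t)=S(m,0)\,e^{-m\Phi(t)},\qquad \Phi(t):=\beta\int_0^t J(s)\,\mathrm{d}s .
\end{equation}
Since $J\ge 0$, the function $\Phi$ is nondecreasing; I would then upgrade this to \emph{strictly} increasing: applying Duhamel's formula to \eqref{eq: SIR_I} and using that the incidence term $\beta m J(t)S(m,t)$ is nonnegative gives $I(m,t)\ge I(m,0)e^{-\gamma t}$, so as long as the infected compartment is initially nontrivial (e.g.\ under the proportional initial condition with $I_0>0$) we have $J(t)\ge e^{-\gamma t}Q_1^I(0)>0$ and hence $\Phi'(t)=\beta J(t)>0$ for all $t\ge 0$.

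Substituting the formula for $S(m,t)$ gives $Q_k^S(t)=\int_0^1 m^k S(m,0)e^{-m\Phi(t)}\,\mathrm{d}m=:B_k(\Phi(t))$, where $B_k(\phi)>0$ for all $\phi\ge 0$ because $S(\cdot,0)$ is a nonnegative density that is not a.e.\ zero and $m^k>0$ on $(0,1)$. Because $\Phi$ is strictly increasing, it then suffices to prove that the map $\phi\mapsto B_{k+1}(\phi)/B_k(\phi)$ is strictly decreasing on $[0,\infty)$, since $Q_{k+1}^S(t)/Q_k^S(t)=(B_{k+1}/B_k)(\Phi(t))$ is then a strictly decreasing function composed with a strictly increasing one.

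For the monotonicity in $\phi$, introduce the finite positive measure $\mathrm{d}\mu_\phi(m)=m^k S(m,0)e^{-m\phi}\,\mathrm{d}m$ on $(0,1)$, so that $B_{k+1}(\phi)/B_k(\phi)=\int m\,\mathrm{d}\mu_\phi/\int \mathrm{d}\mu_\phi$ is the mean of $m$ under the normalized measure. Differentiating under the integral sign (justified since the integrands are smooth and bounded on $(0,1)$) yields
\begin{equation}
\frac{\mathrm{d}}{\mathrm{d}\phi}\frac{B_{k+1}(\phi)}{B_k(\phi)}=\frac{B_{k+1}(\phi)^2-B_{k+2}(\phi)B_k(\phi)}{B_k(\phi)^2},
\end{equation}
and Cauchy--Schwarz gives $B_{k+1}^2=\big(\int m\,\mathrm{d}\mu_\phi\big)^2\le\big(\int m^2\,\mathrm{d}\mu_\phi\big)\big(\int\mathrm{d}\mu_\phi\big)=B_{k+2}B_k$, so the numerator equals $-B_k^2\,\mathrm{var}_{\mu_\phi}(m)\le 0$. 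Equality would force $m$ to be $\mu_\phi$-a.e.\ constant, i.e.\ $S(\cdot,0)$ concentrated at a single mobility value; under the proportional initial condition $S(m,0)=(1-I_0)f(m)$ this is excluded exactly because $f$ is not a Dirac delta. Hence the derivative is strictly negative and the proof is complete.

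The only places where hypotheses genuinely enter are the two strictness statements: that $\Phi$ is strictly increasing, which needs the initially nontrivial infected population, and that the Cauchy--Schwarz step is strict, which is precisely where "$f$ not a Dirac delta" (equivalently, $S(\cdot,0)$ not a point mass) is used. I expect this variance/equality-case bookkeeping to be the main obstacle; the rest — solving the decoupled linear ODE and differentiating the moment ratio — is routine.
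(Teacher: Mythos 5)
Your proof is correct, and it reaches the same key inequality as the paper but by a different mechanism. The paper works directly in the time variable: multiplying \eqref{eq: SIR_S} by $m^k$ and integrating gives $\dot{Q}_k^S=-\beta Q_1^I Q_{k+1}^S$, and then the quotient rule plus H\"older's (Cauchy--Schwarz) inequality $Q_k^SQ_{k+2}^S>(Q_{k+1}^S)^2$ shows the $t$-derivative of $Q_{k+1}^S/Q_k^S$ is negative. You instead integrate the decoupled $S$-equation explicitly, $S(m,t)=S(m,0)e^{-m\Phi(t)}$ with $\Phi(t)=\beta\int_0^t Q_1^I$, and transfer the monotonicity question to the variable $\phi$, where the same Cauchy--Schwarz/variance computation applies to the measure $m^kS(m,0)e^{-m\phi}\,\mathrm{d}m$. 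What your route buys is precisely the bookkeeping the paper leaves implicit: strict decrease requires $Q_1^I(t)>0$ for all $t$ (the paper's formula \eqref{eq: dt_QS} carries the factor $Q_1^I$, and if the infected compartment were empty the ratio would be constant), and you supply this via the Duhamel bound $I(m,t)\ge I(m,0)e^{-\gamma t}$ under a nontrivial initial infection; you also make the equality case of Cauchy--Schwarz explicit, tying it to $S(\cdot,0)$ being a point mass, which is exactly where ``$f$ is not a Dirac delta'' enters --- the paper simply asserts strict inequality. Note that the theorem statement itself does not mention an initial condition with $I_0>0$; both proofs need it, and you are right to flag it as a hidden hypothesis rather than a defect of your argument.
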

\begin{proof}
   By multiplying $m^k$ to \eqref{eq: SIR_S} and integrating $m$ over $[0,1]$, we obtain that
   \begin{equation} \label{eq: dotQS}
       \dot{Q}_k^S = -\beta Q_1^IQ_{k+1}^S, \quad \text{for all} ~k\ge 0.
   \end{equation}
   Using \eqref{eq: dotQS}, we obtain the time derivative of ${Q_{k+1}^S}/{Q_k^S}$
   \begin{equation}\label{eq: dt_QS}
       \frac{\mathrm{d}}{\mathrm{d}t}\left(\frac{Q_{k+1}^S}{Q_k^S}\right) = \frac{1}{(Q_k^S)^2} \left(Q_k^S \dot{Q}_{k+1}^S-\dot{Q}_k^SQ_{k+1}^S\right) 
       = -\frac{\beta Q_1^I}{(Q_k^S)^2} \left[Q_k^SQ_{k+2}^S-(Q_{k+1}^S)^2\right].
   \end{equation}
   From H\"{o}lder's inequality, we have that 
   \begin{equation}
   Q_k^SQ_{k+2}^S-(Q_{k+1}^S)^2 > 0,
   \end{equation}
   which implies that the time derivative in \eqref{eq: dt_QS} is negative, so that ${Q_{k+1}^S}(t)/{Q_k^S}(t)$ strictly decreases over time $t$. 
\end{proof}

For $k=0$, the ratio in Theorem \ref{thm: moment_decrease} simplifies to $Q_1^S(t)/Q_0^S(t)$, representing the mean mobility of the susceptible population. Thus, in the mobility-based SIR model \eqref{eq: densitySIR}, the mean mobility of the susceptible population decreases over time $t$. However, this monotonic property does not necessarily apply to the infected and recovered populations, as illustrated in Figure \ref{fig: mean-mobility-dynamics}. In addition, if $f$ is a Dirac delta distribution, then the entire population has a constant mobility, causing the mean mobility to remain unchanged over time.
\begin{figure}[hpt]
    \centering
    \begin{subfigure}{0.3\textwidth}
        \caption{$Q_1^S/Q_0^S$}
        \includegraphics[width=\hsize]{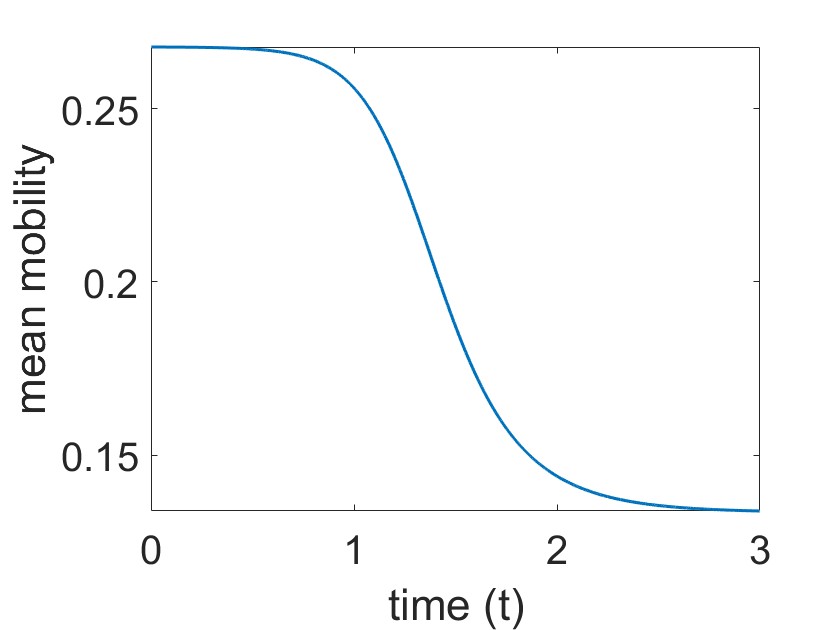}
    \end{subfigure}
    \hfill
    \begin{subfigure}{0.3\textwidth}
        \caption{$Q_1^I/Q_0^I$}
        \includegraphics[width=\hsize]{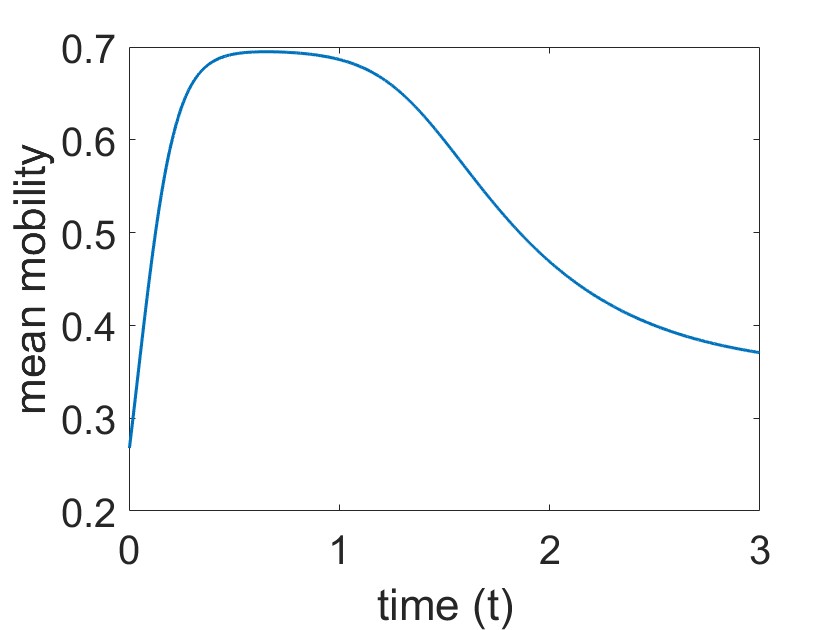}
    \end{subfigure}
    \hfill
    \begin{subfigure}{0.3\textwidth}
        \caption{$Q_1^R/Q_0^R$}
        \includegraphics[width=\hsize]{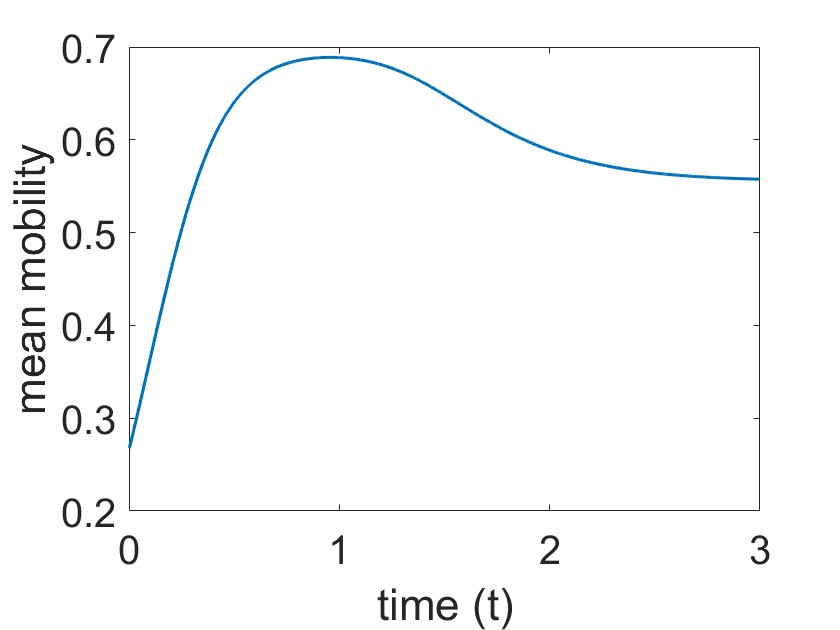}
    \end{subfigure}
    \caption{Time evolution of the mean mobility of three compartments in the mobility-based SIR model \eqref{eq: densitySIR} with parameters $\beta=1.5$, $\gamma=0.13$,  $I_0=0.0001$, and $f=f_1$ in \eqref{eq: initial}. The mean mobility of the susceptible population decreases strictly over time, while the mean mobility of the infected and the recovered populations increases first and then decreases as time progresses.}
    \label{fig: mean-mobility-dynamics}
\end{figure}

\subsection{Final pandemic size} \label{sec: final_size}
The final pandemic size, defined as the total number of individuals infected over the course of an epidemic, is a critical quantity for assessing disease transmissions. For the classical SIR model, it is well known that the final pandemic size $R_\infty$ satisfies
\begin{equation}
\label{eq: SIRfinalsize}
R_\infty + \exp( - \mathcal{R}_0 R_\infty) = 1,
\end{equation}
where $\mathcal{R}_0$ is the basic reproduction number.
Equation \eqref{eq: SIRfinalsize} often overestimates the final pandemic size. For instance, when $\mathcal{R}_0 = 2$, Equation \eqref{eq: SIRfinalsize} suggests that $R_\infty$ is around $0.8$, implying that nearly $80\%$ of the population will ultimately be infected, which is a scenario rarely observed in practice.

In reality, populations have heterogeneous mobility. High-mobility individuals are more likely to be infected early in a pandemic, and once they recover, they provide indirect protection to low-mobility individuals. Typically, $\mathcal{R}_0$ is estimated from the initial exponential growth rate, which is skewed by the higher infection rates among high-mobility individuals. This bias leads to an overestimation of the final pandemic size. 
Take the recent COVID-19 pandemic as an example. Studies estimate the basic reproduction number $\mathcal{R}_0$ for COVID-19 to be between 2 and 5~\cite{ke2021estimating,zhang2020estimation,d2020assessment} in early 2020. However, later variants, such as Delta and Omicron, prove significantly more contagious. Studies suggest that the $\mathcal{R}_0$ of the Omicron variant is as high as 10~\cite{wu2022emergence,liu2022effective,burki2022omicron}, which implies, according to the final pandemic size estimation in \eqref{eq: SIRfinalsize}, that nearly the entire population would eventually become infected. However, empirical data indicate otherwise. As of December 2021, after two years of multiple infection waves caused by various SARS-CoV-2 variants, the seroprevalence in the United States is only 33.5\%. Moreover, the largest infection wave, caused by the Omicron variant from December 2021 to February 2022, infects less than one-quarter of the total population~\cite{clarke2022seroprevalence}. These numbers are far below the final pandemic size suggested by the classical SIR model. 

In this section, we derive the final pandemic size of the mobility-based SIR model \eqref{eq: densitySIR} with the proportional initial condition \eqref{eq: homogeneous_initial}. We provide a quantitative explanation of why the classical SIR model tends to overestimate the final pandemic size in comparison to our mobility-based SIR model \eqref{eq: densitySIR}.

\begin{theorem}\label{thm: final_size_formula}
    Let $R_{\infty}$ be the final pandemic size of the mobility-based SIR model \eqref{eq: densitySIR} with the proportial initial condition \eqref{eq: homogeneous_initial}. Then the final pandemic size $R_{\infty}$ satisfies 
    \begin{equation} \label{eq: Rinfty}
    \begin{aligned}
        R_{\infty} + \langle f \exp\left(- \frac{\alpha m \beta}{\gamma}\right)\rangle &= 1, 
    \end{aligned}
    \end{equation}
    where $\alpha$ is the root on the interval $[0,1]$ of the equation
    \begin{equation}
    \alpha + \langle mf \exp\left(-\frac{\alpha m \beta}{\gamma}\right)\rangle = \langle mf \rangle.
    \end{equation}
\end{theorem}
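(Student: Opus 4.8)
The plan is to integrate the equations of \eqref{eq: densitySIR} along mobility strata, exploiting the fact that the infection term couples to $I$ only through the single scalar $Q_1^I(t) = \int_0^1 \bar m I(\bar m, t)\,\mathrm{d}\bar m$. First I would introduce the cumulative infection pressure $\Phi(t) := \int_0^t Q_1^I(s)\,\mathrm{d}s$. Then \eqref{eq: SIR_S} becomes, for each fixed $m$, a linear ODE $\partial_t S(m,t) = -\beta m Q_1^I(t) S(m,t)$, which integrates to $S(m,t) = S(m,0)\exp(-\beta m \Phi(t))$. With the proportional initial condition \eqref{eq: homogeneous_initial}, $S(m,0) = (1-I_0)f(m)$, and (taking $I_0 \to 0$ for the final-size statement, as is standard) $S(m,t) = f(m)\exp(-\beta m \Phi(t))$. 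The quantity $\alpha$ in the theorem should turn out to be $\alpha = \gamma\,\Phi(\infty)$, i.e. a rescaled total infection pressure; I would set $\Phi_\infty := \lim_{t\to\infty}\Phi(t)$ and aim to show this limit is finite and satisfies the stated fixed-point equation.

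Next I would derive the two closing relations. For the final size: since $R_\infty = \langle R(\cdot,\infty)\rangle$ and $I(m,\infty) = 0$ (infection dies out, which follows because $\mathcal R_0$-type arguments or directly because $S$ is monotone and bounded so $Q_1^I \to 0$, forcing $\Phi_\infty < \infty$), population conservation \eqref{eq: SIRf} gives $R(m,\infty) = f(m) - S(m,\infty) = f(m)(1 - \exp(-\beta m \Phi_\infty))$. Integrating over $m$ yields $R_\infty = 1 - \langle f\exp(-\beta m \Phi_\infty)\rangle$, which is exactly \eqref{eq: Rinfty} once we identify $\alpha/\gamma = \Phi_\infty$. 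For the equation determining $\alpha$: add \eqref{eq: SIR_S} and \eqref{eq: SIR_I}, multiply by $m$, and integrate over $[0,1]$ to get $\dot Q_1^S + \dot Q_1^I = -\gamma Q_1^I$. Integrating from $0$ to $\infty$ and using $Q_1^I(\infty) = 0$, $Q_1^I(0) \approx 0$: $Q_1^S(\infty) - Q_1^S(0) = -\gamma \Phi_\infty$, i.e. $Q_1^S(\infty) = \langle mf\rangle - \gamma\Phi_\infty$. On the other hand $Q_1^S(\infty) = \int_0^1 m f(m)\exp(-\beta m\Phi_\infty)\,\mathrm{d}m = \langle mf\exp(-\beta m \Phi_\infty)\rangle$. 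Equating and writing $\alpha = \gamma\Phi_\infty$ gives $\alpha + \langle mf\exp(-\alpha m\beta/\gamma)\rangle = \langle mf\rangle$, as claimed.

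The main obstacle is justifying that $\Phi_\infty < \infty$ and that $\alpha$ is well-defined and lies in $[0,1]$. Finiteness: $S(m,t)$ is nonincreasing in $t$ and bounded below by $0$, so $S(m,\infty)$ exists; since $\langle S\rangle + \langle I\rangle + \langle R\rangle = 1$ and $\langle R\rangle$ is nondecreasing, all three limits exist, and $\langle \dot R\rangle = \gamma\langle I\rangle \to 0$ forces $\langle I\rangle(\infty) = 0$, hence $Q_1^I(\infty)=0$; then $\dot Q_1^S = -\beta Q_1^I Q_2^S$ being integrable (as $Q_1^S$ is monotone bounded) together with $Q_2^S$ bounded below away from $0$ only if... — more cleanly, from $Q_1^S(\infty) = \langle mf\rangle - \gamma\Phi_\infty \ge 0$ we directly get $\Phi_\infty \le \langle mf\rangle/\gamma$, so $\alpha = \gamma\Phi_\infty \le \langle mf\rangle \le 1$ and $\alpha \ge 0$; this is the clean argument to present. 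Existence/uniqueness of the root on $[0,1]$: the map $g(\alpha) = \alpha + \langle mf\exp(-\alpha m\beta/\gamma)\rangle$ has $g(0) = \langle mf\rangle$... wait, that must be reconciled — actually $g(0) = \langle mf\rangle$ means $\alpha=0$ solves it trivially, so I would instead argue that $g(\alpha) - \langle mf\rangle$ is strictly monotone (its derivative is $1 - (\beta/\gamma)\langle m^2 f\exp(-\alpha m\beta/\gamma)\rangle$, whose sign relates to $\mathcal R_0$), distinguishing the cases $\mathcal R_0 \le 1$ (only $\alpha = 0$, no epidemic) and $\mathcal R_0 > 1$ (a unique positive root, which is the relevant $\Phi_\infty$ picked out by the dynamics). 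Tying the dynamically-defined $\Phi_\infty$ to the correct root — rather than the spurious $\alpha=0$ — will require noting that $\Phi$ is strictly increasing whenever $I_0>0$, then passing to the limit $I_0\to 0^+$.
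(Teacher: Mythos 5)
Your proposal is correct and follows essentially the same route as the paper's proof: integrate the $S$-equation along each mobility stratum to get $S(m,t)=S(m,0)\exp(-\beta m\Phi(t))$ (the paper writes the exponent as $\frac{\beta}{\gamma}m\langle mR\rangle(t)$, which equals $\beta m\Phi(t)$ since $\partial_t R=\gamma I$), then close the system at $t\to\infty$ using $I(m,\infty)=0$ to obtain the $\alpha$-equation and the final-size relation. Your additional care about finiteness of $\Phi_\infty$, the bound $\alpha\le\langle mf\rangle\le 1$, and the selection of the nontrivial root over $\alpha=0$ goes beyond what the paper records, but it is consistent with and only sharpens the same argument.
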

\begin{proof}
We solve \eqref{eq: densitySIR} for $S(m,t)$ and obtain that
\begin{equation}\label{eq: Smt}
    S(m,t)=S(m,0)\exp\left(-\frac{\beta}{\gamma} m \langle mR\rangle (t)\right).
\end{equation}
We multiply the above equation by $m$ and integrate it over $m$, then obtain
\begin{equation}
    \langle mS \rangle(t) = \langle m S(m,0)\exp\left(-\frac{\beta}{\gamma} m \langle mR\rangle (t)\right)\rangle.
\end{equation}
Since $I(m,t)\to 0$ as $t\to\infty$, we have 
\begin{equation}
\langle mf \rangle - \langle mR \rangle(\infty) = \langle m S(m,0)\exp\left(-\frac{\beta}{\gamma} m \langle mR\rangle (\infty)\right)\rangle.
\end{equation}
Using the initial condition \eqref{eq: homogeneous_initial} and $S(m,0)\approx f(m)$, we find that $\alpha=\langle mR\rangle(\infty)$ satisfies
\begin{equation} \label{eq: alpha}
\alpha + \langle mf \exp\left(-\frac{\alpha m \beta}{\gamma}\right)\rangle = \langle mf \rangle.
\end{equation}
We integrate \eqref{eq: Smt} over $m$ and take $t$ to infinite, then we obtain that $R_\infty=\langle R\rangle(\infty)$ satisfies \eqref{eq: Rinfty}. 
\end{proof}

Theorem \ref{thm: final_size_formula} offers a method to compute the final pandemic size of the model \eqref{eq: densitySIR} efficiently. We first determine $\alpha$ by finding the root of \eqref{eq: alpha} within the interval $[0,1]$ and then use the value of $\alpha$ to compute $R_\infty$ using \eqref{eq: Rinfty}.
Once the parameters $\beta$ and $\gamma$ are fixed, the mobility distribution $f(m)$ solely determines both the final pandemic size $R_\infty$ and the basic reproduction number $\mathcal{R}_0$ as given in \eqref{eq: R0}. 
In Theorem \ref{thm: final-size-estimate}, we prove that among all distributions that yield the same basic reproduction number $\mathcal{R}_0$, the Dirac delta distribution $f_{\delta}$, which corresponds to a population with the same mobility, maximizes the final pandemic size.

\begin{theorem} \label{thm: final-size-estimate}
    Let $R_{\infty}$ and $R^{\delta}_{\infty}$ be the final pandemic sizes associated with the mobility distribution $f$ and the Dirac delta mobility distribution $f_\delta$, respectively.
    Suppose that $f$ and $f_{\delta}$ yield the same basic reproduction number $\mathcal{R}_0$ \eqref{eq: R0}, then $R_\infty \le R^{\delta}_{\infty}$ and the equality occurs only if $f=f_\delta$.
\end{theorem}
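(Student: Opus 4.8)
The plan is to compare $R_{\infty}$ against the classical final-size curve~\eqref{eq: SIRfinalsize} and then to reduce the statement to a moment inequality in $f$. Write $c=\beta/\gamma$, so that $\mathcal{R}_0=c\langle m^2 f\rangle$ by Theorem~\ref{prop: R0}; I treat the relevant case $\mathcal{R}_0>1$, since for $\mathcal{R}_0\le 1$ Theorem~\ref{thm: final_size_formula} gives $R_{\infty}=R^{\delta}_{\infty}=0$. Set $g(m):=1-\exp(-c\alpha m)$, where $\alpha\in(0,1]$ is the positive root of~\eqref{eq: alpha}, equivalently of $\alpha=\langle mfg\rangle$; by Theorem~\ref{thm: final_size_formula} we then have $R_{\infty}=\langle fg\rangle$. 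Applying Theorem~\ref{thm: final_size_formula} to $f_\delta=\delta(m-m_0)$ with $m_0=\sqrt{\langle m^2 f\rangle}$ and eliminating its $\alpha$ reduces~\eqref{eq: alpha} and~\eqref{eq: Rinfty} to the scalar equation $R^{\delta}_{\infty}+\exp(-\mathcal{R}_0 R^{\delta}_{\infty})=1$, i.e. the classical formula~\eqref{eq: SIRfinalsize}.

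Next, consider $F(r):=1-\exp(-\mathcal{R}_0 r)-r$ on $[0,1]$. Since $F(0)=0$, $F'(0)=\mathcal{R}_0-1>0$, $F(1)=-\exp(-\mathcal{R}_0)<0$, and $F$ is strictly concave, $F\ge 0$ holds exactly on $[0,R^{\delta}_{\infty}]$; hence $R_{\infty}\le R^{\delta}_{\infty}$ is equivalent to $F(R_{\infty})\ge 0$, that is, to $1-R_{\infty}\ge\exp(-\mathcal{R}_0 R_{\infty})$. Since $1-R_{\infty}=\langle f\exp(-c\alpha m)\rangle$, Jensen's inequality for the convex map $x\mapsto e^{-x}$ against the probability density $f$ yields $1-R_{\infty}\ge\exp(-c\alpha\langle mf\rangle)$, so it suffices to prove $c\alpha\langle mf\rangle\le\mathcal{R}_0 R_{\infty}$. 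Dividing by $c$ and substituting $\alpha=\langle mfg\rangle$, $R_{\infty}=\langle fg\rangle$, and $\mathcal{R}_0/c=\langle m^2 f\rangle$, this becomes exactly
\[
\langle mfg\rangle\,\langle mf\rangle\ \le\ \langle m^2 f\rangle\,\langle fg\rangle .
\]

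To prove this inequality I would symmetrize in the two integration variables: a direct computation gives
\[
\langle m^2 f\rangle\langle fg\rangle-\langle mf\rangle\langle mfg\rangle=\tfrac12\int_0^1\!\!\int_0^1 f(m)f(m')\,(m-m')\bigl(m\,g(m')-m'\,g(m)\bigr)\,\mathrm{d}m\,\mathrm{d}m',
\]
so it is enough that the kernel $(m-m')\bigl(m\,g(m')-m'\,g(m)\bigr)$ be nonnegative, which holds if and only if $m\mapsto g(m)/m$ is nonincreasing on $(0,1]$. For $g(m)=1-e^{-c\alpha m}$ with $c\alpha>0$ this is true, since $\rho(m):=e^{-c\alpha m}(c\alpha m+1)-1$ satisfies $\rho(0)=0$ and $\rho'(m)=-(c\alpha)^2 m\,e^{-c\alpha m}<0$, so $\frac{\mathrm{d}}{\mathrm{d}m}\bigl(g(m)/m\bigr)=\rho(m)/m^2<0$. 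This establishes $R_{\infty}\le R^{\delta}_{\infty}$. For the equality statement, $R_{\infty}=R^{\delta}_{\infty}$ forces $F(R_{\infty})=0$, hence every inequality in the chain above is an equality; in particular Jensen's inequality is tight for $e^{-c\alpha m}$ with $c\alpha>0$, which by strict convexity forces $f$ to be a point mass, and the constraint $\langle m^2 f\rangle=m_0^2$ then forces $f=f_\delta$.

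The step I expect to be the main obstacle is setting up the reduction correctly: recognizing that comparing $R_{\infty}$ with the classical curve is legitimate because of the concavity of $F$, and then locating the two-stage bound --- Jensen followed by the moment inequality $\langle mfg\rangle\langle mf\rangle\le\langle m^2 f\rangle\langle fg\rangle$ --- that uses precisely the hypothesis of equal second moments (equivalently, equal $\mathcal{R}_0$). Once that scaffolding is in place, verifying the monotonicity of $g(m)/m$, the sign of the symmetrized kernel, and the endpoint analysis of $F$ are routine.
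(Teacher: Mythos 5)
Your proof is correct, but it takes a genuinely different route from the paper. The paper argues dynamically: it parametrizes the flow by $Q_0^R$, derives the implicit relation $R_\infty+\exp\bigl(-\tfrac{\beta}{\gamma}\int_0^{R_\infty}\widetilde H(\alpha)\,\mathrm{d}\alpha\bigr)=1$ with $\widetilde H=Q_1^SQ_1^I/(Q_0^SQ_0^I)$, and then bounds $\tfrac{\beta}{\gamma}\widetilde H\le\mathcal{R}_0$ along trajectories using the moment-ratio monotonicity of Theorem~\ref{thm: moment_decrease} (case $k=1$) together with Lemma~\ref{lem: QIratio} (the sign-interlacing argument for $Q_1^I/Q_0^I$). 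You instead work statically from the final-size formulas of Theorem~\ref{thm: final_size_formula}: with $g(m)=1-e^{-c\alpha m}$ you reduce the claim, via concavity of $F(r)=1-e^{-\mathcal{R}_0 r}-r$, to $1-R_\infty\ge e^{-\mathcal{R}_0R_\infty}$, and prove that by Jensen plus the correlation (Chebyshev-type) inequality $\langle mfg\rangle\langle mf\rangle\le\langle m^2f\rangle\langle fg\rangle$, which your symmetrization and the monotonicity of $(1-e^{-am})/m$ establish; strict convexity in Jensen then gives the equality case. What your route buys: it bypasses Lemma~\ref{lem: QIratio} entirely, and it makes explicit the comparison-with-the-classical-curve step (your $F$-concavity argument) that the paper's last line, combining \eqref{eq: Ri}, \eqref{eq: Ridelta}, and \eqref{eq: Halpha}, uses only implicitly. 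What it costs: it leans on Theorem~\ref{thm: final_size_formula}, hence on the $I_0\to0$ approximation $S(m,0)\approx f(m)$ (though the paper's own \eqref{eq: Ri} makes the same approximation with $Q_0^S(0)\approx 1$), and you should add one line justifying existence and uniqueness of the positive root $\alpha\in(0,\langle mf\rangle)$ when $\mathcal{R}_0>1$ (e.g.\ $\phi(\alpha)=\alpha+\langle mf e^{-c\alpha m}\rangle-\langle mf\rangle$ is convex with $\phi(0)=0$, $\phi'(0)=1-\mathcal{R}_0<0$, $\phi(\langle mf\rangle)>0$). Also note that in the degenerate regime $\mathcal{R}_0\le1$ the idealized final sizes both vanish, so the ``equality only if $f=f_\delta$'' clause is meaningful only for $\mathcal{R}_0>1$; this wrinkle is shared with the paper's formulation rather than introduced by your argument.
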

Theorem \ref{thm: final-size-estimate} elucidates why the classical SIR model tends to overestimate the final pandemic size compared to our mobility-based model. The classical SIR model assumes a homogeneous population, which corresponds to a Dirac delta mobility distribution \( f_\delta \) in the framework of our model \eqref{eq: densitySIR}. Its final pandemic size $R_\infty^\delta$ is greater than that of any other mobility distribution, given that all distributions share the same basic reproduction number $\mathcal{R}_0$.

We prepare the proof of Theorem \ref{thm: final-size-estimate} with the following lemma.

\begin{lemma} \label{lem: QIratio}
Let $I(m,t)$ be the mobility distribution of the infected population in the mobility-based SIR model \eqref{eq: densitySIR} with the proportional initial condition \eqref{eq: homogeneous_initial}. Then for all $t\ge 0$, it holds that
    \begin{equation} \label{eq: QI_inequality}
        \frac{Q_1^I(t)}{Q_0^I(t)} \le \frac{\langle m^2f \rangle}{\langle mf\rangle},
    \end{equation}
and the equality occurs only if $f$ is a Dirac delta distribution.
\end{lemma}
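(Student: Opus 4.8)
The plan is to recast the claim as an invariant-region (barrier) statement for the single scalar $r(t) := Q_1^I(t)/Q_0^I(t)$, the mean mobility of the infected compartment, showing it can never rise above the constant $C := \langle m^2 f\rangle / \langle mf\rangle$. The crucial structural fact is that, under the proportional initial condition \eqref{eq: homogeneous_initial}, $C$ coincides with $Q_2^S(0)/Q_1^S(0)$, so Theorem~\ref{thm: moment_decrease} with $k=1$ gives $Q_2^S(t)/Q_1^S(t) < C$ for every $t>0$ (assuming $f$ is not Dirac); the decay of this susceptible moment ratio is exactly what will pin $r$ below $C$. I would also record the initial value: the proportional initial condition gives $r(0) = \langle mf\rangle/\langle f\rangle = \langle mf\rangle$, and $\langle mf\rangle \le C$ is precisely the Cauchy--Schwarz inequality $\langle mf\rangle^2 \le \langle m^2 f\rangle\langle f\rangle$, with equality iff $f$ is a Dirac mass.

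First I would derive the moment ODEs for the infected compartment: multiplying \eqref{eq: SIR_I} by $m^k$ and integrating over $[0,1]$, and using $\int_0^1 m\bar{m}\,S(m,t)I(\bar{m},t)\,\mathrm{d}\bar{m} = m S(m,t) Q_1^I(t)$, yields $\dot{Q}_k^I = \beta Q_1^I Q_{k+1}^S - \gamma Q_k^I$ for all $k\ge 0$. The quotient rule then produces, after the $\gamma$-terms cancel,
\[
\dot{r} = \frac{\beta Q_1^I}{(Q_0^I)^2}\bigl(Q_2^S Q_0^I - Q_1^S Q_1^I\bigr) = \frac{\beta Q_1^I Q_1^S}{Q_0^I}\left(\frac{Q_2^S}{Q_1^S} - r\right).
\]
Before using this I would establish strict positivity on every bounded time interval: from the integral form $I(m,t) = I_0 f(m)e^{-\gamma t} + \beta m\int_0^t e^{-\gamma(t-s)}S(m,s)Q_1^I(s)\,\mathrm{d}s$ together with $S\ge 0$ one gets $I(m,t)\ge I_0 f(m)e^{-\gamma t}$, hence $Q_0^I(t), Q_1^I(t)>0$; and \eqref{eq: Smt} gives $S(m,t) = (1-I_0)f(m)\exp(-\tfrac{\beta}{\gamma}m\langle mR\rangle(t)) > 0$, hence $Q_1^S(t)>0$. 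Thus the prefactor $\beta Q_1^I Q_1^S/Q_0^I$ is strictly positive for all finite $t$, and $\operatorname{sgn}\dot{r} = \operatorname{sgn}\bigl(Q_2^S/Q_1^S - r\bigr)$.

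Now the barrier argument. Assume $f$ is not Dirac, so $r(0) < C$ strictly. If $r(t_1)\ge C$ for some $t_1>0$, set $t^* = \inf\{t>0 : r(t)\ge C\}\in(0,t_1]$; by continuity $r(t^*)=C$ while $r<C$ on $[0,t^*)$, so $\dot{r}(t^*)\ge 0$. But Theorem~\ref{thm: moment_decrease} with $k=1$ gives that $Q_2^S/Q_1^S$ strictly decreases from $Q_2^S(0)/Q_1^S(0)=C$, so $Q_2^S(t^*)/Q_1^S(t^*) < C = r(t^*)$, and combined with the positive prefactor this forces $\dot{r}(t^*)<0$ --- a contradiction. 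Hence $r(t)<C$ for all $t\ge 0$ when $f$ is not Dirac, while for $f=\delta(m-m_0)$ one has $r(t)\equiv m_0 = m_0^2/m_0 = C$, which is the equality case.

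The only place real input is needed --- and the expected main obstacle --- is controlling $\operatorname{sgn}\dot{r}$ at a putative crossing time; this is exactly where Theorem~\ref{thm: moment_decrease} and the proportional initial condition (which aligns $C$ with the initial susceptible moment ratio) are indispensable, the rest being bookkeeping plus the routine positivity and differentiability checks. A purely static alternative also works: using \eqref{eq: Smt} one can write $I(m,t)=f(m)\bigl(I_0 e^{-\gamma t}+m\,\eta(m,t)\bigr)=:f(m)\phi(m,t)$ with $\eta(m,t)=\beta(1-I_0)\int_0^t e^{-\gamma(t-s)}\exp(-\tfrac{\beta}{\gamma}m\langle mR\rangle(s))Q_1^I(s)\,\mathrm{d}s\ge 0$ nonincreasing in $m$, so that $r\le C$ is equivalent to $\langle m^2 f\rangle\langle f\phi\rangle - \langle mf\phi\rangle\langle mf\rangle\ge 0$; symmetrizing this double integral over $(m,\bar{m})$ reduces the integrand to $I_0 e^{-\gamma t}(m-\bar{m})^2 + m\bar{m}(m-\bar{m})(\eta(\bar{m},t)-\eta(m,t))$, a Chebyshev-type rearrangement that is nonnegative term by term and strictly positive unless $f$ is Dirac.
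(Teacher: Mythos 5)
Your argument is correct, and its computational core coincides with the paper's: you derive the same identity \eqref{eq: difference} for $r=Q_1^I/Q_0^I$, note via Cauchy--Schwarz that the proportional initial condition \eqref{eq: homogeneous_initial} gives $r(0)=\langle mf\rangle<\langle m^2f\rangle/\langle mf\rangle$ when $f$ is not a Dirac mass, and invoke Theorem~\ref{thm: moment_decrease} with $k=1$ to keep $Q_2^S/Q_1^S$ below its initial value. Where you differ is in how the conclusion is extracted. The paper partitions $(0,\infty)$ into intervals $\Omega_k$ according to the sign of $\dot r$ and argues interval by interval, which tacitly assumes those sign changes occur at isolated times $t_1<t_2<\dots$; your first-crossing (barrier) argument at $t^*=\inf\{t>0: r(t)\ge C\}$, combined with the explicitly verified strict positivity of the prefactor $\beta Q_1^I Q_1^S/Q_0^I$, reaches the same strict bound without any structural assumption on the zero set of $\dot r$, so it is a cleaner and more robust way to finish the same argument. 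Your second, ``static'' proof is genuinely different: writing $I(m,t)=f(m)\bigl(I_0e^{-\gamma t}+m\,\eta(m,t)\bigr)$ via Duhamel and \eqref{eq: Smt}, with $\eta$ nonincreasing in $m$, and symmetrizing the resulting double integral reduces the claim to a Chebyshev-type rearrangement inequality whose integrand contains the manifestly positive term $I_0e^{-\gamma t}(m-\bar m)^2$; this route bypasses Theorem~\ref{thm: moment_decrease} entirely and delivers the strict inequality (hence the equality case) pointwise in time, at the cost of justifying the integral representation and the monotonicity of $\eta$. Both of your versions are sound; the barrier proof buys rigor in the paper's own framework, and the rearrangement proof buys independence from the moment-monotonicity theorem.
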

\begin{proof}
If $f$ is a Dirac delta distribution $f(m)=\delta(m-m_0)$, then both sides of \eqref{eq: QI_inequality} equal $m_0$. We assume that $f$ is not a Dirac delta distribution in the rest of the proof.

Using \eqref{eq: densitySIR}, we compute the time derivative of ${Q_1^I(t)}/{Q_0^I(t)}$ and obtain
\begin{equation}\label{eq: difference}
    \frac{\mathrm{d}}{\mathrm{d}t}\left(\frac{Q_1^I}{Q_0^I}\right) = \frac{\beta Q_1^IQ_1^S}{Q_0^I}\left(\frac{Q_2^S}{Q_1^S} - \frac{Q_1^I}{Q_0^I}\right).
\end{equation}
We denote the time that $\frac{\mathrm{d}}{\mathrm{d}t}\left({Q_1^I(t)}/{Q_0^I(t)}\right)$ changes its sign as $t_k$ and partition the time interval $(0,\infty)$ into open intervals $\Omega_k$ as shown in Figure \ref{fig: time-intervals}. 
\begin{figure}[htp]
\centering
\includegraphics[width=0.3\textwidth]{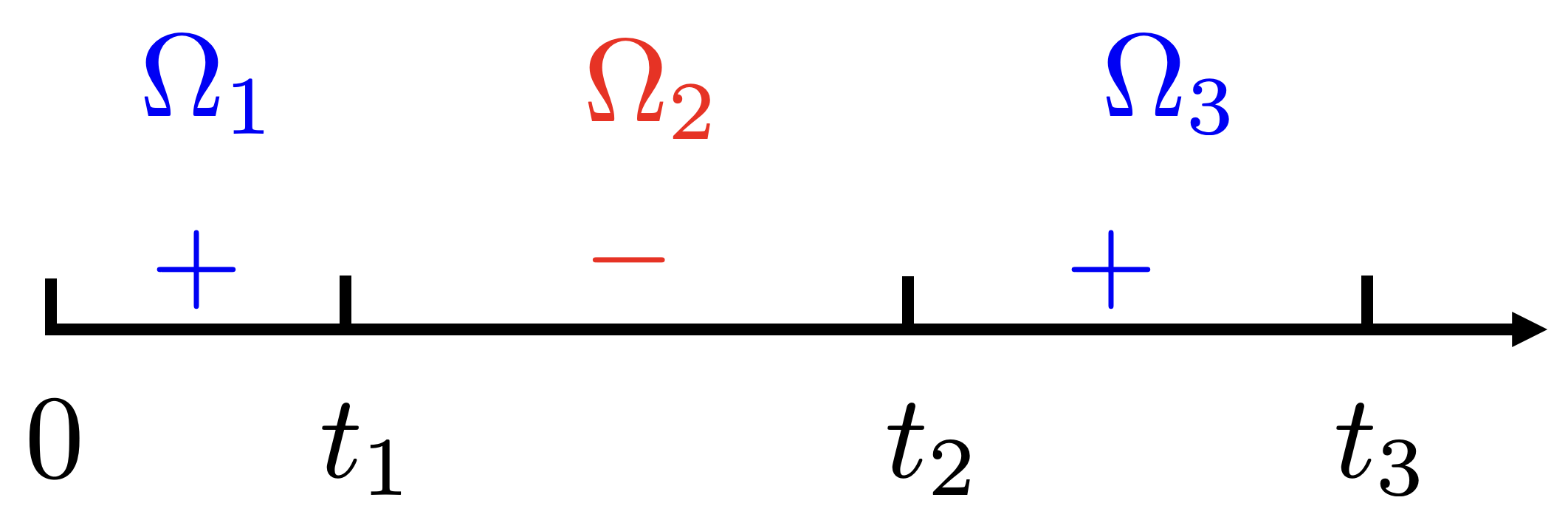}
\caption{A partition of intervals due to the sign of \eqref{eq: difference}.}
\label{fig: time-intervals}
\end{figure}

Using the initial condition \eqref{eq: homogeneous_initial}, we have
\begin{equation}
    \frac{\mathrm{d}}{\mathrm{d}t}\left(\frac{Q_1^I}{Q_0^I}\right)(0) = \frac{\beta Q_1^I(0)Q_1^S(0)}{Q_0^I(0)} \left(\frac{\langle m^2f\rangle}{\langle mf\rangle} - \langle mf\rangle \right) > 0,
\end{equation}
so $\frac{\mathrm{d}}{\mathrm{d}t}\left({Q_1^I}/{Q_0^I}\right)$ is positive on the interval $\Omega_1$. Due to the smoothness, the signs of \eqref{eq: difference} interlace on $\Omega_1$, $\Omega_2$, $\Omega_3$, $\ldots$.

If $t\in \cup_{k=1}^{\infty} {\Omega}_{2k-1}$ or $t\in\{t_1,t_2,\ldots\}$, then we know \eqref{eq: difference} is nonnegative, which implies that
\begin{equation} \label{eq: inequal1}
\frac{Q_1^I(t)}{Q_0^I(t)} \le \frac{Q_2^S(t)}{Q_1^S(t)} < \frac{Q_2^S(0)}{Q_1^S(0)} = \frac{\langle m^2f\rangle}{\langle mf\rangle},
\end{equation}
where the second inequality uses Theorem \ref{thm: moment_decrease} for the case $k=1$. 
If $t\in\cup_{k=1}^{\infty} \Omega_{2k}$ (supposing $t\in\Omega_{2k}$), then ${Q_1^I(t)}/{Q_0^I(t)}$ decreases strictly with respect to $t$ on the interval $\Omega_{2k}$. Due to continuity, we have
\begin{equation}\label{eq: inequal2}
\frac{Q_1^I(t)}{Q_0^I(t)} \le \frac{Q_1^I(t_{2k-1})}{Q_0^I(t_{2k-1})} < \frac{\langle m^2f \rangle}{\langle mf\rangle}, \quad \text{for all~} t\in\Omega_{2k-1},
\end{equation}
where the second inequality holds because of \eqref{eq: inequal1}. Equations \eqref{eq: inequal1} and \eqref{eq: inequal2} complete the proof.
\end{proof}

Now, we turn to the proof of Theorem \ref{thm: final-size-estimate}.
\begin{proof}
From \eqref{eq: densitySIR}, we have
\begin{equation}  \label{eq: QSR}
    \begin{aligned}
        \frac{\mathrm{d}Q_0^S}{\mathrm{d}t} = -\beta Q_1^SQ_1^I, \quad \frac{\mathrm{d} Q_0^R}{\mathrm{d}t} = \gamma Q_0^I, \quad \text{for all~} t\ge 0.
    \end{aligned}
\end{equation}
Since $Q_0^R(t)$ increases strictly with respect to $t$, we have a one-to-one correspondence between $Q_0^R$ and $t$. Using the chain rule, we have
\begin{equation}
\frac{\mathrm{d}Q_0^S}{\mathrm{d}Q_0^R}\frac{\mathrm{d}Q_0^R} {\mathrm{d}t} = -\beta Q_1^SQ_1^I,
\end{equation}
yielding
\begin{equation} \label{eq: dQ0S}
\frac{\mathrm{d}Q_0^S}{\mathrm{d}Q_0^R} = -\frac{\beta}{\gamma} \frac{Q_1^SQ_1^I}{Q_0^I}.
\end{equation}
Due to the one-to-one correspondence between $t$ and $Q_0^R$, we can view all functions of $t$ (including $Q_0^S$) as a function of $Q_0^R$ and rewrite \eqref{eq: dQ0S} as
\begin{equation} \label{eq: Qalpha}
    \frac{\mathrm{d}Q_0^S}{\mathrm{d} Q_0^R} = -\frac{\beta}{\gamma} \widetilde{H}(Q_0^R) Q_0^S, 
\end{equation}
where $\widetilde{H}$ is a function of $Q_0^R$ and defined by
\begin{equation} \label{eq: H}
\widetilde{H}(Q_0^R(t)) = \frac{Q_1^S(t) Q_1^I(t)}{Q_0^S(t)Q_0^I(t)}.
\end{equation}
Recall that $Q_0^R(0)=0$ and $Q_0^R(\infty)=R_\infty$. We integrate \eqref{eq: Qalpha} over the interval $[0,R_{\infty}]$ and use $Q_0^S(\infty)=1-R_\infty$, obtaining
\begin{equation} \label{eq: Ri}
R_\infty + \exp\left(-\frac{\beta}{\gamma} \int_0^{R_\infty} \widetilde{H}(\alpha) ~\mathrm{d}\alpha\right) = 1.
\end{equation}
Recall that $R_\infty^\delta$ satisfies \eqref{eq: SIRfinalsize}. We rewrite \eqref{eq: SIRfinalsize} as
\begin{equation}\label{eq: Ridelta}
R^\delta_\infty + \exp\left(-\int_0^{R_\infty^\delta} \mathcal{R}_0~ \mathrm{d}\alpha\right) = 1.
\end{equation}

From Theorem \ref{thm: moment_decrease} for $k=1$ and Lemma \ref{lem: QIratio}, we have
\begin{equation} \label{eq: two_inequal}
    \frac{Q_1^S(t)}{Q_0^S(t)}<\langle mf\rangle \quad \text{and} \quad  \frac{Q_1^I(t)}{Q_0^I(t)} < \frac{\langle m^2f \rangle}{\langle mf\rangle}, \quad \text{for all~} t>0.
\end{equation}
Using \eqref{eq: R0}, \eqref{eq: H}, and \eqref{eq: two_inequal}, we obtain that
\begin{equation}\label{eq: Halpha}
\frac{\beta}{\gamma}\widetilde{H}(\alpha) \le \mathcal{R}_0, \quad \text{for all~} \alpha \in [0,R_\infty),
\end{equation}
and the equality holds only if $f$ is a Dirac delta distribution (\ie $f=f_\delta$).

From \eqref{eq: Ri},\eqref{eq: Ridelta}, and \eqref{eq: Halpha}, we obtain that $R_\infty\le R_\infty^\delta$ and the equality holds only if $f=f_\delta$. 
\end{proof}

\section{Learning the mobility distribution from data} \label{sec: learning}
The mobility distribution plays a significant role in the mobility-based compartmental models. A realistic mobility distribution allows for more accurate predictions of disease transmission and enhances the precision of epidemiological models to reflect real-world scenarios. 
However, it is often impossible to measure the mobility of a whole population directly in real-world situations. 
In practice, the available data during epidemics is the number of infected individuals, which corresponds to the infected population ratio, defined as
\begin{equation} \label{eq: define_mI}
Q_0^I(t) = \int_0^1 I(m,t) ~ \mathrm{d}m,
\end{equation}
in the mobility-based SIRS model \eqref{eq: densitySIRS}.
In this section, we propose an inverse problem to infer the mobility distribution $f(m) \in \mathcal{L}^1(0,1)$ from the time-dependent infected ratio $Q_0^I(t)$. We prove that $f(m)$ is identifiable from $Q_0^I(t)$ (\ie the inverse problem has a unique solution) and provide a machine-learning-based approach to infer $f(m)$ from real data.

\subsection{Identifiability of the mobility distribution}
In this section, we show that the observation $Q_0^I(t)$ contains sufficient information to infer the mobility distribution $f(m)$. In other words, this inference problem has a unique solution. 

\begin{theorem}\label{thm: inverse}
Let $I(m,t)$ be the infected mobility distribution of the mobility-based SIRS model \eqref{eq: densitySIRS} with the initial condition \eqref{eq: homogeneous_initial} and $Q_0^I(t)$ be the 0th order moment of $I(m,t)$, defined in \eqref{eq: define_mI}. Assume that $Q_0^I(t)$ is arbitrarily smooth (\ie $Q_0^I(t)\in\cC^{\infty}(0,\infty)$) and that all the parameters ($I_0$, $\beta$, $\gamma$, and $\varepsilon$) are fixed and not equal to 0, then there is a one-to-one correspondence between $f$ and $Q_0^I$.
\end{theorem}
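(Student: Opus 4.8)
The goal is to show the forward map \(f \mapsto Q_0^I(\cdot)\) is injective; since the space of admissible observations is by definition its image, injectivity is exactly the asserted one-to-one correspondence. My plan is to recover the full moment sequence \(Q_k^f := \int_0^1 m^k f(m)\,\mathrm{d}m\), \(k \ge 0\), from the data and then conclude: two \(\mathcal{L}^1(0,1)\) densities with identical moments differ by a function orthogonal to every polynomial, hence coincide a.e.\ by the Weierstrass theorem (equivalently, the Hausdorff moment problem on \([0,1]\) is determinate). The hypothesis \(Q_0^I \in \cC^\infty(0,\infty)\) is used only to legitimize taking all time-derivatives at \(t = 0^+\); for injectivity it suffices that two mobility distributions producing the same \(Q_0^I(\cdot)\) produce the same numbers \(\frac{\mathrm{d}^n}{\mathrm{d}t^n}Q_0^I\big|_{t=0}\), and hence the same moments.

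To extract the moments I would use the moment hierarchy of \eqref{eq: densitySIRS}. Since the interaction integral there equals \(\beta m\,Q_1^I(t)\,S(m,t)\), multiplying by \(m^k\) and integrating over \([0,1]\) gives, for every \(k \ge 0\),
\begin{equation*}
\dot Q_k^S = -\beta Q_1^I Q_{k+1}^S + \varepsilon Q_k^R, \qquad \dot Q_k^I = \beta Q_1^I Q_{k+1}^S - \gamma Q_k^I, \qquad \dot Q_k^R = \gamma Q_k^I - \varepsilon Q_k^R,
\end{equation*}
together with \(Q_k^S + Q_k^I + Q_k^R = Q_k^f\) from \eqref{eq: SIRf} and the proportional initial data \eqref{eq: homogeneous_initial}: \(Q_k^S(0) = (1-I_0)Q_k^f\), \(Q_k^I(0) = I_0 Q_k^f\), \(Q_k^R(0) = 0\). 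The key structural fact is that one time-differentiation raises the moment index by at most one — only through the \(Q_{k+1}^S\) term — so \(\frac{\mathrm{d}^n}{\mathrm{d}t^n}Q_0^I\big|_{t=0}\) is a polynomial in \(Q_0^f,\dots,Q_n^f\) and in the fixed nonzero parameters, depending on \(Q_n^f\) in a triangular fashion. In particular \(Q_0^I(0) = I_0\) fixes \(Q_0^f = 1\); \(\dot Q_0^I(0) = \beta I_0(1-I_0)(Q_1^f)^2 - \gamma I_0\) determines \((Q_1^f)^2\) and hence \(Q_1^f\) by taking the nonnegative root; \(\ddot Q_0^I(0) = \beta^2 I_0(1-I_0)(1-2I_0)(Q_1^f)^2 Q_2^f + (\text{already known})\) determines \(Q_2^f\); and inductively \(\frac{\mathrm{d}^n}{\mathrm{d}t^n}Q_0^I\big|_{t=0}\) determines \(Q_n^f\). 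Recovering all moments then yields \(f\).

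I expect the main obstacle to be the \emph{non-degeneracy} of this recursion, i.e.\ showing that at each step \(Q_n^f\) enters \(\frac{\mathrm{d}^n}{\mathrm{d}t^n}Q_0^I\big|_{t=0}\) with a coefficient one can divide by. Tracing the highest-moment term through the hierarchy, that leading coefficient is \(\beta^n\) times a monomial in \(Q_1^f\) and \(I_0(1-I_0)\) times a polynomial in \(I_0\), and the argument hinges on this polynomial not vanishing for the given \(I_0\). The \(n=1\) step is a quadratic (harmless); at \(n=2\) the coefficient \(\beta^2 I_0(1-I_0)(1-2I_0)(Q_1^f)^2\) is nonzero as long as \(I_0 \ne \tfrac12\), and the leading coefficients at higher \(n\) likewise carry a factor \((1-2I_0)\). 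Thus the genuinely delicate case is \(I_0 = \tfrac12\), where the whole ``diagonal'' of the recursion collapses; there one must argue that the same moment reappears in higher derivatives through the quadratic nonlinearity — for example \(\frac{\mathrm{d}^3}{\mathrm{d}t^3}Q_0^I\big|_{t=0}\) depends on \(Q_2^f\) quadratically with nonzero leading coefficient — so that the full infinite family of derivative constraints still determines every moment. Settling this degenerate case, rather than the routine bookkeeping of the moment hierarchy, is where the real work lies.
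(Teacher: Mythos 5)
Your route is the same underlying mechanism as the paper's proof: both read off the derivatives $\frac{\mathrm{d}^n}{\mathrm{d}t^n}Q_0^I(0)$ through the moment hierarchy \eqref{eq: momentsQ} and exploit the triangular structure (one time-differentiation raises the moment index by at most one). The paper packages it contrapositively---take the first index $K$ with $Q_K^f\neq Q_K^{\tilde f}$ and propagate the discrepancy along the anti-diagonal $k+\ell=K$ down to $\frac{\mathrm{d}^K}{\mathrm{d}t^K}Q_0^I(0)$ (Lemma \ref{lemma_derivative})---whereas you solve the recursion forward and finish with determinacy of the Hausdorff moment problem on $[0,1]$; the paper uses that determinacy fact implicitly when it asserts that $f\neq\tilde f$ forces a first moment index at which they differ, so making it explicit is a minor improvement, not a different proof.

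The gap you flag, however, is genuine and you have not closed it. Your coefficient computations are correct, and the degeneracy is not an artifact of your forward bookkeeping: if the moments of $f$ and $\tilde f$ first differ at index $K\ge 2$, with $\Delta=Q_K^f-Q_K^{\tilde f}$, then propagating differences along the anti-diagonal under the proportional initial condition \eqref{eq: homogeneous_initial} gives exactly
\begin{equation*}
\frac{\mathrm{d}^K}{\mathrm{d}t^K}Q_0^I(0)-\frac{\mathrm{d}^K}{\mathrm{d}t^K}Q_0^{\tilde I}(0)=(-1)^K\beta^K I_0^{K-1}(1-I_0)(1-2I_0)\bigl(Q_1^f\bigr)^K\,\Delta,
\end{equation*}
so at $I_0=\tfrac12$ the $K$-th derivative carries no first-order information about $Q_K^f$, for every $K\ge 2$. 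Your proposed repair---that the missing moment reappears quadratically in higher derivatives with a coefficient one can divide by---is precisely the hard step and is asserted rather than proved; until it is carried out, your argument establishes injectivity only for $I_0\neq\tfrac12$ (for those $I_0$ the plan does go through, since $Q_1^f>0$ lets you divide at every stage and take the nonnegative root at $n=1$, modulo writing out the induction, which is the same bookkeeping as the paper's lemma). Be aware that the paper does not supply the missing argument either: the proof of Lemma \ref{lemma_derivative} claims only one term of the Leibniz expansion differs between the two solutions, but at the final step $\ell=L=K$ the $\alpha=L-1$ term differs as well, and the two unequal contributions cancel exactly when $I_0=\tfrac12$---the same degeneracy you uncovered. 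So your proposal is essentially as strong as the paper's argument, but neither, as written, disposes of the case $I_0=\tfrac12$.
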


Recall the moment notation in \eqref{eq: moments}. We multiply $m^k$ to \eqref{eq: densitySIRS} and integrate over $m$, obtaining the equations for moments
\begin{subequations}\label{eq: momentsQ}
\begin{align} \label{eq: Qeq1}
    \frac{\mathrm{d}}{\mathrm{d}t} Q_k^S &= -\beta Q_1^IQ_{k+1}^S + \varepsilon Q_k^R, \\
        \frac{\mathrm{d}}{\mathrm{d}t} Q_k^I &= \beta Q_1^IQ_{k+1}^S - \gamma Q_k^I, \label{eq: Qeq2} \\
        \frac{\mathrm{d}}{\mathrm{d}t} Q_k^R &= \gamma Q_k^I -  \varepsilon Q_k^R. \label{eq: Qeq3}
\end{align}
\end{subequations}
Let $\tilde{S}(m,t), \tilde{I}(m,t), \tilde{R}(m,t)$ be the mobility distributions of \eqref{eq: densitySIRS} with the initial condition 
\begin{equation}
    \tilde{S}(m,0) = (1-I_0)\tilde{f}(m), \quad \tilde{I}(m,0) = I_0\tilde{f}(m), \quad \tilde{R}(m,0) = 0,
\end{equation}
and $Q_k^{\tilde S},Q_k^{\tilde I},Q_k^{\tilde R}$ be their $k$-th order moments. 

\begin{lemma}\label{lemma1}
    The following three statements are equivalent:
    \begin{equation}
    \begin{aligned}
    &Q_k^S(t)=Q_k^{\tilde{S}}(t), \quad &&\text{for all~} t\ge 0, \\
    &Q_k^I(t)=Q_k^{\tilde{I}}(t), \quad &&\text{for all~} t\ge 0, \\
    &Q_k^R(t)=Q_k^{\tilde{R}}(t), \quad &&\text{for all~} t\ge 0.
    \end{aligned}
    \end{equation}
\end{lemma}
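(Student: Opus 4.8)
The plan is to prove the three equalities equivalent by showing that, for each fixed $k$, the time series $Q_k^S(\cdot)$, $Q_k^I(\cdot)$, and $Q_k^R(\cdot)$ determine one another through invertible operations that depend only on the parameters $\beta,\gamma,\varepsilon,I_0$ and not on the distribution $f$. The single observation that makes this work is that, although $\dot Q_k^S$ and $\dot Q_k^I$ in \eqref{eq: momentsQ} both involve the next-order moment $Q_{k+1}^S$, that term appears with opposite signs in \eqref{eq: Qeq1} and \eqref{eq: Qeq2} and cancels when the three equations are added; the only scalar ODE one actually needs is the closed, linearly driven equation \eqref{eq: Qeq3} for $Q_k^R$, so the moment hierarchy never has to be closed.

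First I would record the conservation law: summing \eqref{eq: Qeq1}--\eqref{eq: Qeq3} gives $Q_k^S(t)+Q_k^I(t)+Q_k^R(t)\equiv Q_k^f:=\int_0^1 m^k f(m)\,\mathrm{d}m$ for all $t\ge 0$, and likewise for the tilde system with the constant $Q_k^{\tilde f}$. Combined with the proportional initial data \eqref{eq: homogeneous_initial}, i.e. $Q_k^I(0)=I_0 Q_k^f$, $Q_k^S(0)=(1-I_0)Q_k^f$, $Q_k^R(0)=0$, this shows that the value at $t=0$ of either $Q_k^S$ or $Q_k^I$ already recovers the common constant $Q_k^f$, using $I_0\notin\{0,1\}$.

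Next I would establish the $Q_k^I\leftrightarrow Q_k^R$ correspondence: equation \eqref{eq: Qeq3} with $Q_k^R(0)=0$ gives the explicit forward map $Q_k^R(t)=\gamma\int_0^t e^{-\varepsilon(t-s)}Q_k^I(s)\,\mathrm{d}s$, while $Q_k^I=\gamma^{-1}(\dot Q_k^R+\varepsilon Q_k^R)$ is the inverse, legitimate since the moments are differentiable in $t$ (the right-hand sides of the ODE system are smooth) and $\gamma\neq 0$; both maps involve only $\gamma,\varepsilon$, so $Q_k^I\equiv Q_k^{\tilde I}$ iff $Q_k^R\equiv Q_k^{\tilde R}$. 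Finally I would close the loop through $S$: conservation gives $Q_k^S=Q_k^f-Q_k^I-Q_k^R$ with $Q_k^f=Q_k^I(0)/I_0$, so $Q_k^I$ determines $Q_k^S$; conversely, writing $Q_k^f=Q_k^S(0)/(1-I_0)$ and $Q_k^I+Q_k^R=Q_k^f-Q_k^S$ and inserting the forward formula for $Q_k^R$ yields a second-kind Volterra equation (equivalently, after one differentiation, a linear ODE with known initial value) that determines $Q_k^I$ uniquely from $Q_k^S$. Chaining these bijections gives the three-way equivalence.

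I do not anticipate a genuine obstacle: once the $Q_{k+1}^S$ cancellation is noticed the argument is entirely linear. The only points to state carefully are that the moments are differentiable in $t$, so the inverse maps are well defined, and that $I_0$ being neither $0$ nor $1$ is precisely what allows one to read off the common total-population moment $Q_k^f$ from the observed dynamics; both facts are immediate.
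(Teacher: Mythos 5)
Your argument is correct and is essentially the paper's proof: both hinge on the fact that the hierarchy coupling $Q_{k+1}^S$ cancels upon summing the moment equations (conservation), use the proportional initial condition with $I_0\neq 0,1$ to recover the common constant $Q_k^f$ at $t=0$, and then settle the pairwise equivalences by uniqueness for linear problems involving only $\gamma$ and $\varepsilon$. The only difference is packaging: you phrase the steps as explicit Duhamel/Volterra inverse maps, whereas the paper derives homogeneous linear ODEs for the differences $Q_k^S-Q_k^{\tilde S}$, $Q_k^I-Q_k^{\tilde I}$, $Q_k^R-Q_k^{\tilde R}$ with zero initial data.
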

\begin{proof}
    We add \eqref{eq: Qeq1} and \eqref{eq: Qeq2} together and obtain
    \begin{equation} \label{eq: derivative_Q}
    \frac{\mathrm{d}}{\mathrm{d}t}Q_k^S + \frac{\mathrm{d}}{\mathrm{d}t}Q_k^I = \varepsilon Q_k^f- (\varepsilon+\gamma)Q_k^I-\varepsilon Q_k^S,
    \end{equation}
    which uses the fact that $Q_k^f=Q_k^S+Q_k^I+Q_k^R$. Similarly, we have
    \begin{equation}
    \label{eq: derivative_Q_tilde}
    \frac{\mathrm{d}}{\mathrm{d}t}Q_k^{\tilde{S}} + \frac{\mathrm{d}}{\mathrm{d}t}Q_k^{\tilde{I}} = \varepsilon Q_k^{\tilde{f}} - (\varepsilon+\gamma)Q_k^{\tilde{I}} - \varepsilon Q_k^{\tilde{S}}.
    \end{equation}

    {\bf Step 1.} We show that $Q_k^I=Q_k^{\tilde{I}}$ is equivalent to $Q_k^S=Q_k^{\tilde{S}}$.
    Suppose that $Q_k^I=Q_k^{\tilde{I}}$. We compute the difference between \eqref{eq: derivative_Q} and \eqref{eq: derivative_Q_tilde} and 
    obtain
    \begin{equation} \label{eq: Qks_diff} 
    \frac{\mathrm{d}}{\mathrm{d}t}\left(Q_k^S - Q_k^{\tilde{S}}\right) = \varepsilon(Q_k^f-Q_k^{\tilde{f}})-\varepsilon(Q_k^S-Q_k^{\tilde{S}}).
    \end{equation} 
    Since $Q_k^I(0) = I_0Q_k^f$ and $Q_k^{\tilde{I}}(0) = I_0Q_k^{\tilde{f}}$, we use $Q_k^I(0)=Q_k^{\tilde{I}}(0)$ and obtain $Q_k^f = Q_k^{\tilde f}$, so that
    \begin{equation}
    Q_k^S(0) - Q_k^{\tilde{S}}(0) = (1-I_0)\left(Q_k^f-Q_k^{\tilde{f}}\right)=0.
    \end{equation}
    Therefore, $e(t) = Q_k^S(t) - Q_k^{\tilde{S}}(t)$ satisfies a linear ODE 
    \begin{equation}\frac{\mathrm{d}}{\mathrm{d}t} e(t)=-\varepsilon e(t), \quad e(0)=0,\end{equation}
    yielding $e(t)=0$ for all $t\ge 0$, so that $Q_k^S = Q_k^{\tilde{S}}$.
    Suppose that $Q_k^S=Q_k^{\tilde{S}}$. We use the similar idea as above and show $e(t)=Q_k^I(t) - Q_k^{\tilde{I}}(t)$ satisfies
    \begin{equation}
    \frac{\mathrm{d}}{\mathrm{d}t} e(t) = -(\varepsilon+\gamma)e(t),\quad e(0)=0,
    \end{equation}
    which yields $Q_k^I(t) = Q_k^{\tilde{I}}(t)$ for all $t\ge 0$.

    {\bf Step 2.} We then show that $Q_k^I=Q_k^{\tilde{I}}$ is equivalent to $Q_k^R=Q_k^{\tilde{R}}$.
    Suppose that $Q_k^R = Q_k^{\tilde{R}}$, then $\mathrm{d}Q_k^R/\mathrm{d}t = \mathrm{d}Q_k^{\tilde{R}}/\mathrm{d}t$. From \eqref{eq: Qeq3}, we have $Q_k^I = Q_k^{\tilde{I}}$. Suppose $Q_k^I = Q_k^{\tilde{I}}$. Using the similar idea as above, we have $e(t)=Q_k^R - Q_k^{\tilde{R}}$ satisfies
    \begin{equation} \frac{\mathrm{d}}{\mathrm{d}t} e(t) = -\varepsilon e(t),\quad e(0)=0,\end{equation}
    implying that $Q_k^R = Q_k^{\tilde{R}}$.
\end{proof}

Theorem \ref{thm: inverse} and Lemma \ref{lemma1} indicate that $f(m)$ also has a one-to-one correspondence to $Q_0^{S}(t)$ and $Q_0^{R}(t)$. This implies that, alternatively, we can infer the mobility distribution $f(m)$ from $Q_0^{S}(t)$ or $Q_0^{R}(t)$ if they are available, and the inference problems have unique solutions.

\begin{lemma}\label{lemma_derivative}
Let $L$ and $K$ be positive integers and $K\ge 2$. Assume the moments satisfy
    \begin{equation} \label{eq: lemma_eq}
    \begin{aligned}
    \frac{\mathrm{d}^\ell}{\mathrm{d}t^\ell} Q_k^{S,I,R}(0) &= \frac{\mathrm{d}^\ell}{\mathrm{d}t^\ell} Q_k^{\tilde{S},\tilde{I},\tilde{R}}(0), \quad k=0,1,\ldots,K-1-\ell, \\
    \frac{\mathrm{d}^\ell}{\mathrm{d}t^\ell} Q_k^{S,I,R}(0) &\neq \frac{\mathrm{d}^\ell}{\mathrm{d}t^\ell} Q_k^{\tilde{S},\tilde{I},\tilde{R}}(0), \quad k=K-\ell,
    \end{aligned}
    \end{equation}
for $\ell=0,1,\ldots,L-1$, then \eqref{eq: lemma_eq} also holds for $\ell=L$. 
\end{lemma}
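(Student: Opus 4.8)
The plan is to prove Lemma~\ref{lemma_derivative} exactly as stated — it is already the inductive step for an induction on the time-derivative order — by differentiating the moment equations \eqref{eq: momentsQ} $L-1$ times, evaluating at $t=0$, and then doing index bookkeeping against the hypotheses. Applying the Leibniz rule to the bilinear term in \eqref{eq: Qeq2} gives
\begin{equation*}
\frac{\mathrm{d}^L}{\mathrm{d}t^L}Q_k^I(0) = \beta\sum_{j=0}^{L-1}\binom{L-1}{j}\,\frac{\mathrm{d}^j}{\mathrm{d}t^j}Q_1^I(0)\,\frac{\mathrm{d}^{L-1-j}}{\mathrm{d}t^{L-1-j}}Q_{k+1}^S(0)\;-\;\gamma\,\frac{\mathrm{d}^{L-1}}{\mathrm{d}t^{L-1}}Q_k^I(0),
\end{equation*}
and analogous expressions for $Q_k^S$ and $Q_k^R$ (the $R$-equation is linear, hence even simpler). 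Every quantity on the right has time-derivative order at most $L-1$, so the hypothesis \eqref{eq: lemma_eq} applies to each factor, and the whole proof reduces to tracking which (time-order, moment-order) pairs occur.

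For the ``agree'' half, fix $k$ with $0\le k\le K-1-L$ (so in particular $L\le K-1$). In the typical Leibniz summand, $\mathrm{d}^jQ_1^I(0)/\mathrm{d}t^j$ has time-order $j\le L-1\le K-2$, hence moment-order $1\le K-1-j$, so it is in the ``agree'' range; $\mathrm{d}^{L-1-j}Q_{k+1}^S(0)/\mathrm{d}t^{L-1-j}$ has moment-order $k+1\le K-L\le K-1-(L-1-j)$, again ``agree''; and the leftover linear terms $\mathrm{d}^{L-1}Q_k^{I,R}(0)/\mathrm{d}t^{L-1}$ have moment-order $k\le K-1-L<K-1-(L-1)$, ``agree''. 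Since every factor of every summand matches its tilde counterpart, the order-$L$ derivatives of $Q_k^S,Q_k^I,Q_k^R$ all match at $t=0$, which is the first line of \eqref{eq: lemma_eq} for $\ell=L$.

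For the ``disagree'' half at $k=K-L$, I would first record two structural facts, reusing the linear combinations already present in the proof of Lemma~\ref{lemma1}: (i) adding the $S$- and $I$-moment equations kills the bilinear term, leaving the residual $\varepsilon Q_k^R-\gamma Q_k^I$, so at the ``disagree'' moment-order $K-\ell$ (where $K-\ell=K-1-(\ell-1)$ is the last ``agree'' index at time-order $\ell-1$) the order-$\ell$ derivative of $Q_k^S+Q_k^I$ agrees, i.e.\ the $S$- and $I$-disagreements there are equal and opposite for $\ell\ge1$ (for $\ell=0$ they are $(1-I_0)$ and $I_0$ times $Q_K^f-Q_K^{\tilde f}\neq0$, using $0<I_0<1$); (ii) the $R$-equation is linear and $Q_K^R(0)=0=Q_K^{\tilde R}(0)$, so $\mathrm{d}^\ell Q_{K-\ell}^R(0)/\mathrm{d}t^\ell$ agrees for every $\ell$. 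Hence ``the triple differs at moment-order $K-\ell$'' is equivalent to ``$Q_{K-\ell}^S$ (equivalently $Q_{K-\ell}^I$) differs there''. Now, in the displayed formula for $\mathrm{d}^LQ_{K-L}^I(0)/\mathrm{d}t^L$, every summand with $j\ge1$ has both factors in the ``agree'' range, and $-\gamma\,\mathrm{d}^{L-1}Q_{K-L}^I(0)/\mathrm{d}t^{L-1}$ agrees (moment-order $K-L$ is the last ``agree'' index at time-order $L-1$), so the only surviving term of the difference is $\beta\,Q_1^I(0)\big[\mathrm{d}^{L-1}Q_{K-L+1}^S(0)/\mathrm{d}t^{L-1}-\mathrm{d}^{L-1}Q_{K-L+1}^{\tilde S}(0)/\mathrm{d}t^{L-1}\big]$. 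Here $Q_1^I(0)=I_0\int_0^1 mf(m)\,\mathrm{d}m>0$ since $I_0\neq0$, and it equals $Q_1^{\tilde I}(0)$ because $K\ge2$ forces the first moments of $f$ and $\tilde f$ to coincide; the bracket is nonzero because $K-L+1=K-(L-1)$ is the ``disagree'' index at time-order $L-1$, where by (ii) the $R$-moment agrees and hence by (i) the $S$-moment must differ; and $\beta\neq0$. Thus $\mathrm{d}^LQ_{K-L}^I(0)/\mathrm{d}t^L\neq\mathrm{d}^LQ_{K-L}^{\tilde I}(0)/\mathrm{d}t^L$, which is the second line of \eqref{eq: lemma_eq} for $\ell=L$.

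The delicate point — what I expect to be the real obstacle — is the endpoint $k=K-L=0$ (the case relevant to Theorem~\ref{thm: inverse}): there the summand $j=L-1$, namely $Q_1^S(0)\,\mathrm{d}^{L-1}Q_1^I(0)/\mathrm{d}t^{L-1}$, \emph{also} lands on a ``disagree'' index and does not cancel, so it must be combined with the $j=0$ term using fact (i); the surviving difference then collapses to $\beta\,(2I_0-1)\big(\int_0^1 mf\big)$ times the order-$(L-1)$ $S$-disagreement at moment-order $1$, which is nonzero provided $I_0\neq\tfrac12$. I would want to check whether the paper only invokes the lemma for $L\le K-1$ (reaching the top order by a separate argument) or whether the symmetric case $I_0=\tfrac12$ needs its own treatment; everything away from this endpoint is routine Leibniz bookkeeping. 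A preliminary clarification I would make explicit is that ``$Q_k^{S,I,R}(0)\neq Q_k^{\tilde S,\tilde I,\tilde R}(0)$'' means ``at least one of the three differs'', the only reading consistent with the base case $\ell=0$, where $S$ and $I$ differ at moment-order $K$ while the $R$-moment is identically zero.
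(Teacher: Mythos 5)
Your proposal follows essentially the same route as the paper's own proof: differentiate the moment system \eqref{eq: momentsQ} $L-1$ times, expand the bilinear term $\beta Q_1^I Q_{k+1}^S$ by Leibniz, and check which (time-order, moment-order) pairs fall in the `agree' range of \eqref{eq: lemma_eq}. Your `agree' half and your generic `disagree' case $k=K-L\ge 1$, where only the $\alpha=0$ summand $\beta Q_1^I\,\mathrm{d}^{L-1}Q_{K-L+1}^S/\mathrm{d}t^{L-1}$ carries the discrepancy, coincide with the paper's argument.

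What you add beyond the paper is substantive, and your two caveats are both well placed. First, your facts (i)--(ii) are needed to make the statement coherent: at the disagree index the $R$-moment always agrees (it vanishes at $\ell=0$ by \eqref{eq: homogeneous_initial}, and for $\ell\ge1$ it is driven only by agree-range quantities), so the second line of \eqref{eq: lemma_eq} can only be read as `at least one of the triple differs'; the paper's closing remark that one `repeats similar calculations for $S$ and $R$' glosses over this, and your equal-and-opposite relation between the $S$- and $I$-discrepancies is exactly what converts `at least one differs' into `the $I$-moment differs', the form actually used in the proof of Theorem \ref{thm: inverse}. Second, your endpoint worry is a genuine gap in the paper rather than in your write-up: the paper does invoke the lemma at $\ell=K$, $k=0$, and at that endpoint its claim that all paired moments cancel `except for the last term' is false, since the $\alpha=L-1$ summand $\mathrm{d}^{K-1}Q_1^I/\mathrm{d}t^{K-1}\cdot Q_1^S$ also sits at a disagree index. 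Your computation that the surviving difference collapses to $\beta(2I_0-1)\langle mf\rangle$ times the order-$(K-1)$ discrepancy of $Q_1^S$ is correct, so the argument (the paper's as well as yours) establishes the top-order inequality only when $I_0\neq 1/2$; the paper gives no separate treatment of $I_0=1/2$, so its proof of the lemma, and hence of Theorem \ref{thm: inverse} at the top derivative order, silently carries that restriction. Away from this endpoint your bookkeeping is complete and matches the paper's.
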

\begin{proof}
A direct computation yields 
\begin{equation}
\frac{\mathrm{d}^{L}}{\mathrm{d}t^{L}} Q_k^I = \frac{\mathrm{d}^{L-1}}{\mathrm{d}t^{L-1}} (\beta Q_1^IQ_{k+1}^S - \gamma Q_k^I) = \beta\sum_{\alpha=0}^{{L-1}} C_{L-1}^\alpha \frac{\mathrm{d}^\alpha}{\mathrm{d}t^\alpha} Q_1^I \frac{\mathrm{d}^{{L-1}-\alpha}}{\mathrm{d}t^{{L-1}-\alpha}}Q_{k+1}^S -\gamma \frac{\mathrm{d}^{L-1}}{\mathrm{d}t^{L-1}} Q_k^I.
\end{equation}
For $k=0,\ldots,K-L-1$, we use conditions in \eqref{eq: lemma_eq} and obtain
\begin{equation}
\frac{\mathrm{d}^{L}}{\mathrm{d}t^{L}} Q_k^I(0)=\frac{\mathrm{d}^{L}}{\mathrm{d}t^{L}} Q_k^{\tilde I}(0).
\end{equation}
For $k=K-L$, we have
\begin{equation}\begin{aligned}
\frac{\mathrm{d}^{L}}{\mathrm{d}t^{L}} Q_{K-L}^I &= \beta\sum_{\alpha=0}^{{L-1}} C_{L-1}^\alpha \frac{\mathrm{d}^\alpha}{\mathrm{d}t^\alpha} Q_1^I \frac{\mathrm{d}^{{L-1}-\alpha}}{\mathrm{d}t^{{L-1}-\alpha}}Q_{{K-L}+1}^S -\gamma \frac{\mathrm{d}^{L-1}}{\mathrm{d}t^{L-1}} Q_{K-L}^I \\
&= \beta\sum_{\alpha=1}^{{L-1}} C_{L-1}^\alpha \frac{\mathrm{d}^\alpha}{\mathrm{d}t^\alpha} Q_1^I \frac{\mathrm{d}^{{L-1}-\alpha}}{\mathrm{d}t^{{L-1}-\alpha}}Q_{{K-L}+1}^S -\gamma \frac{\mathrm{d}^{L-1}}{\mathrm{d}t^{L-1}} Q_{K-L}^I 
+ \beta Q_1^I {\frac{\mathrm{d}^{L-1}}{\mathrm{d}t^{L-1}}Q_{{K-L}+1}^S}.
\end{aligned}\end{equation}
All paired moments from $f$ and $\tilde{f}$ are equal to each other at $t=0$ from \eqref{eq: momentsQ}, except for the last term since
\begin{equation}
{\frac{\mathrm{d}^{L-1}}{\mathrm{d}t^{L-1}}Q_{{K-L}+1}^S}(0)\neq {\frac{\mathrm{d}^{L-1}}{\mathrm{d}t^{L-1}}Q_{{K-L}+1}^{\tilde S}}(0).
\end{equation}
We complete the proof for moments for the infected population. We repeat similar calculations for the moments of $S$ and $R$ and obtain the results.
\end{proof}

Now we prove Theorem \ref{thm: inverse}.
\begin{proof}  
We use the same moment notation \eqref{eq: moments}. 
Assume that $f\neq \tilde{f}$, then we can find a positive integer $K$ such that
\begin{equation}
\begin{aligned}
&Q_k^f = Q_k^{\tilde f}, \quad \text{~for all}~ k=0,1,\ldots,K-1, \\
&Q_K^f \neq Q_K^{\tilde f}.
\end{aligned}
\end{equation}
From the proportional initial condition \eqref{eq: homogeneous_initial}, we have
\begin{equation}
\begin{aligned}
&Q_k^{S,I,R}(0) = Q_k^{\tilde{S},\tilde{I},\tilde{R}}(0), \quad \text{for all}~ k=0,1,\ldots, K-1,\\
&Q_K^{S,I,R}(0) \neq Q_K^{\tilde{S},\tilde{I},\tilde{R}}(0).
\end{aligned}
\end{equation}
We plug the initial values of moments into \eqref{eq: momentsQ} and obtain
\begin{equation} \label{eq: derivativeQ}
\begin{aligned}
&\frac{\mathrm{d}}{\mathrm{d}t} Q_k^{S,I,R}(0) = \frac{\mathrm{d}}{\mathrm{d}t} Q_k^{\tilde{S},\tilde{I},\tilde{R}}(0), \quad \text{for all}~ k=0,1,\ldots, K-2, \\
&\frac{\mathrm{d}}{\mathrm{d}t} Q_{K-1}^{S,I,R}(0) \neq \frac{\mathrm{d}}{\mathrm{d}t} Q_{K-1}^{\tilde{S},\tilde{I},\tilde{R}}(0),
\end{aligned}
\end{equation}
where the last inequality for the recovered population also uses the facts that
\begin{equation} Q_{K-1}^f=Q_{K-1}^S(0)+Q_{K-1}^I(0)+Q_{K-1}^R(0), \quad Q_{K-1}^{\tilde{f}}=Q_{K-1}^{\tilde{S}}(0)+Q_{K-1}^{\tilde{I}}(0)+Q_{K-1}^{\tilde{R}}(0).\end{equation}
Notice that \eqref{eq: derivativeQ} satisfies the conditions in Lemma \ref{lemma_derivative} for $L=2$. By induction, we know \eqref{eq: lemma_eq} hold for $\ell=2,3,\ldots,K$. In particular, we use the inequality result in Lemma \ref{lemma_derivative} for $\ell=K$ and $k=0$ and obtain
\begin{equation}
\frac{\mathrm{d}^K}{\mathrm{d}t^K} Q_0^{I}(0) \neq \frac{\mathrm{d}^K}{\mathrm{d}t^K} Q_0^{\tilde{I}}(0).
\end{equation}
Therefore, we have $Q_0^I\neq Q_0^{\tilde{I}}$ and the correspondence between $f$ and $Q_0^I$ is one-to-one.

\end{proof}

\subsection{Numerical inference of the mobility distribution}\label{sec: NNs}
Theorem \ref{thm: inverse} establishes a one-to-one correspondence between the mobility distribution $f(m)$ and the infected population ratio $Q_0^I(t)$. This provides a theoretical basis for inferring the mobility distribution from time series data of the infected population. One can infer the mobility distribution from early-stage pandemic data and use it to predict the future course of the pandemic. 
In this section, we propose a numerical approach using neural networks to achieve this inference.

\subsubsection{A neural-network-based learning framework}
Let $\langle I \rangle(t)$ be the ratio of the infected population at time $t$ (\ie $\langle I \rangle(t) = \int_0^1 I(m,t)~\mathrm{d}m$) and $\mathbf{I}$ be the observed time series of $\langle I \rangle(t)$, which takes the form of
\begin{equation}\label{eq: Idiscrete}
\mathbf{I} = (\langle I\rangle(t_1), \langle I\rangle(t_2), \ldots, \langle I\rangle(t_{n_t}))^T.
\end{equation}
Since the data is a finite-dimensional vector, we are only able to infer the mobility distribution $f$ in a finite-dimensional space. 
Let $\bar{f}$ be the spatial discretization of $f$ with $M$ uniform grids on the interval $[0,1]$. We take the discrete cosine transform (DCT) of $\bar{f}$ and denote the DCT of $\bar{f}$ by $\hat{f} = \sqrt{M}(1, \hat{f}_1,\ldots,\hat{f}_{n_f}, \ldots )$. 
Let $\mathbf{f}$ be the first $n_f$ nontrivial modes of $\hat{f}$, \ie
\begin{equation} \label{eq: fdiscrete}
\mathbf{f} = (\hat{f}_1,\hat{f}_2,\ldots,\hat{f}_{n_f})^T. 
\end{equation}
We formulate the inference problem using a feedforward neural network $\mathrm{G}_{\theta}$ to learn the mapping from $\mathbf{I}$ to $\mathbf{f}$:
\begin{equation}
  \mathbf{f} = \mathrm{G}_{\theta}( \mathbf{I}).
\end{equation}

To build the training set, we sample various mobility distributions $f(m)$ and simulate the mobility-based SIRS model \eqref{eq: densitySIRS} to generate the corresponding time series of $\langle I\rangle(t)$. 
When using neural networks to learn unknown parameters or functions, it's essential to sample a training set that covers a wide range of scenarios, ensuring the model is effective across the prediction regime.
In our context, this involves sampling the mobility distribution comprehensively and robustly. 
Since the Gaussian mixture model is a universal approximator of probability densities~\cite{goodfellow2016deep}, we sample $f$ as a Gaussian mixture model with random parameters. In particular, we let
\begin{equation}
{f} \propto \sum_{i = 1}^{\alpha} w_{i} \mathcal{N}( \mu_{i}, \sigma_{i}^{2})
\end{equation}
where $\mathcal{N}(\mu, \sigma^2)$ is the Gaussian probability density function and $\mu_i$, $\sigma_i$, and $w_i$ are random numbers.
Suppose the dataset contains $K$ pairs of $\mathbf{I}_k$ and $\mathbf{f}_k$, which we discretize from $\langle I_k\rangle(t)$ and $f_k(m)$. Then, we determine the fully connected feedforward neural network $\mathrm{G}_{\theta}$ by minimizing the loss function $\mathrm{L}(\theta)$:
\begin{equation} \label{eq: L}
\mathrm{L}(\theta) = \frac{1}{K}\sum_{k=1}^K \| \mathrm{G}_{\theta}( \mathbf{I}_k)-\mathbf{f}_k\|^2,
\end{equation}
where $\|\cdot\|$ is the $L_2$ vector norm.

\subsubsection{Numerical results}
In our simulations, we let $f_k$ be a Gaussian mixture of 3 normal distributions:
\begin{equation}\label{eq: f}
  {f_k} \propto \sum_{i = 1}^{3} w_{i} \mathcal{N}( \mu_{i}, \sigma_{i}^{2}),
\end{equation}
and sample the parameters $w_i$, $\mu_{i}$, and $\sigma_{i}$ as random numbers uniformly on intervals $[0, 1]$, $[-0.1, 1.1]$, and $[0.05, 1.05]$, respectively. 
We solve the mobility-based SIRS model \eqref{eq: densitySIRS} for $\langle I\rangle(t)$ with the initial condition \eqref{eq: homogeneous_initial} and parameters $\beta=0.8,\varepsilon=0.005$, $\gamma=0.14$, and $I_0=0.0001$. 
We sample $K = 10,000$ mobility distributions $f_k$ in the form of \eqref{eq: f} and compute corresponding $\langle I\rangle(t)$.
We discretize $f_k$ and $\langle I\rangle(t)$ with $n_t=101$ in \eqref{eq: Idiscrete} and $n_f=8$ in \eqref{eq: fdiscrete} and obtain the training set containing $K = 10,000$ input-output pairs $(\mathbf{I}_k,\mathbf{f}_k)$. 
To improve the performance, we scale the input data $\mathbf{I}_k$ by a factor of $50$ and the output data $\mathbf{f}_k$ by a factor of $3$. The mean values of the input and output sets are around $1$ after rescaling. 

In our simulations, we use a six-layer fully connected neural network, and the width of layers are 128, 512, 512, 512, 128, and 32, respectively. Each layer uses a Sigmoid activation function and an $L^2$ regularization of $\theta$ with a magnitude of $0.005$. 
For each training step, we sample a random batch from the training set and use the Adam optimizer with a learning rate of 0.0005 to minimize the loss function for each batch. We train the neural network for 2,000 epochs, and during each training epoch, we optimize the loss function using 1,562 batches with a batch size of 54.

We obtain the mapping $\mathrm{G}_\theta$ by minimizing the loss function \eqref{eq: L}. To test the performance of the learned mapping $\mathrm{G}_\theta$, we generate 6 additional mobility distributions ($\mathbf{f}_1, \ldots, \mathbf{f}_6$) using \eqref{eq: f} and simulate the corresponding time series of infected populations ($\mathbf{I}_1, \ldots, \mathbf{I}_6$) as inputs. 
We compare the predicted outputs $\hat{\mathbf{f}}_i = \mathrm{G}_\theta(\mathbf{I}_i)$ for $i = 1, \ldots, 6$ with the ground truth $\mathbf{f}_i$. We compute their inverse DCTs and show the comparison between the learned mobility distributions $\hat{f}_i$ and the actual distributions ${f}_i$ in Figure \ref{fig: learned-mobility}. The learned distributions successfully capture the overall trends of the true mobility distributions.
\begin{figure}[ht]
\begin{center}
    \includegraphics[width=0.85\textwidth]{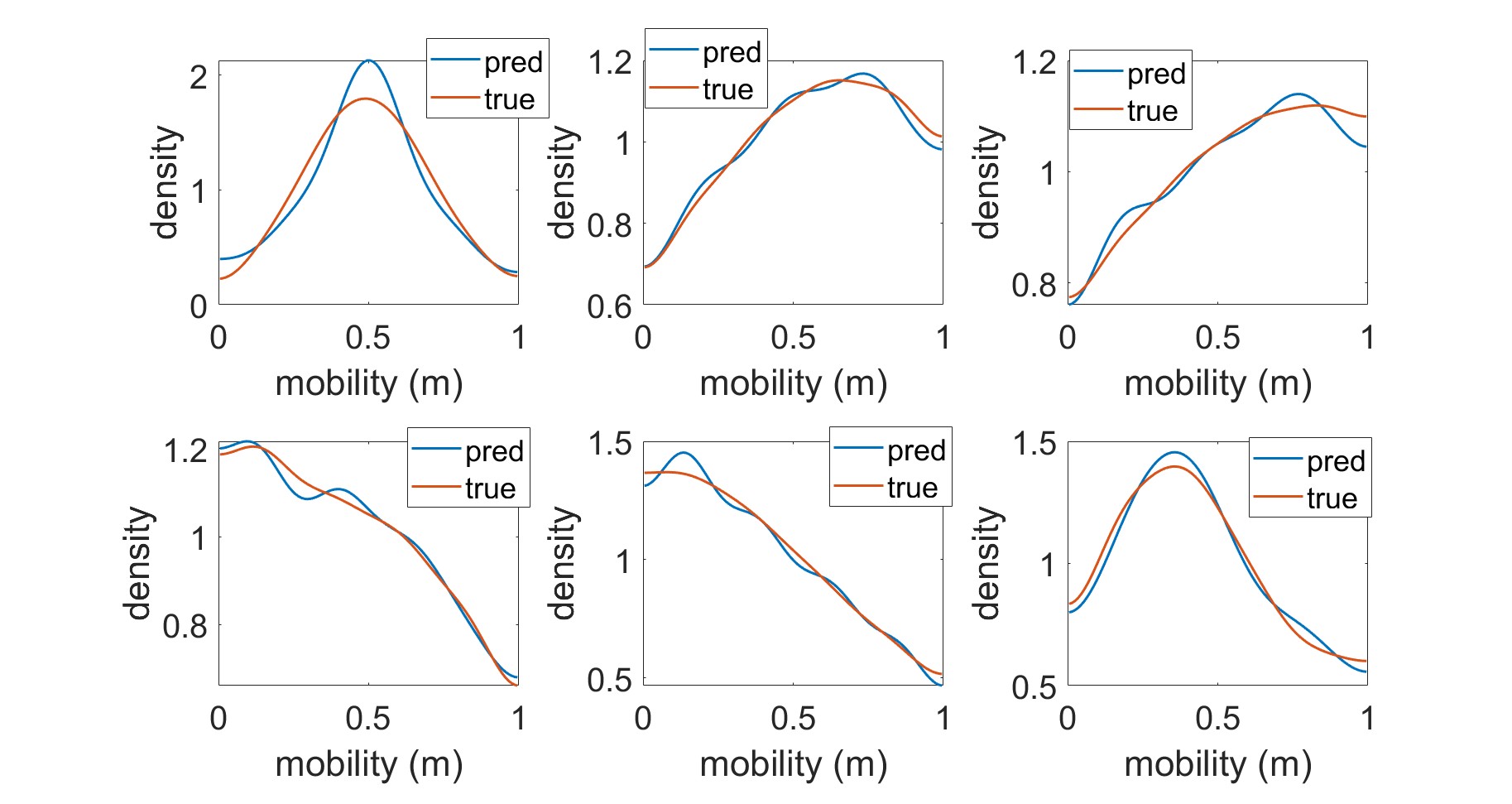}
\end{center}
\caption{A comparison of 6 predicted mobility densities $\hat{f}_i(m)$ and their ground truth $f_i(m)$.}
\label{fig: learned-mobility}
\end{figure}

We solve the mobility-based SIRS model \eqref{eq: densitySIRS} using the predicted mobility $\hat{{f}}_i$ and obtain the corresponding time series of the infected population $\langle \hat{I}_i\rangle(t)$. Figure \ref{fig: time_series} compares these predicted time series with the ground truth. The results show that the inferred mobility distributions generate time series of the infected population that closely match the ground truth.
\begin{figure}[ht]
    \centering
    \includegraphics[width=0.85\textwidth]{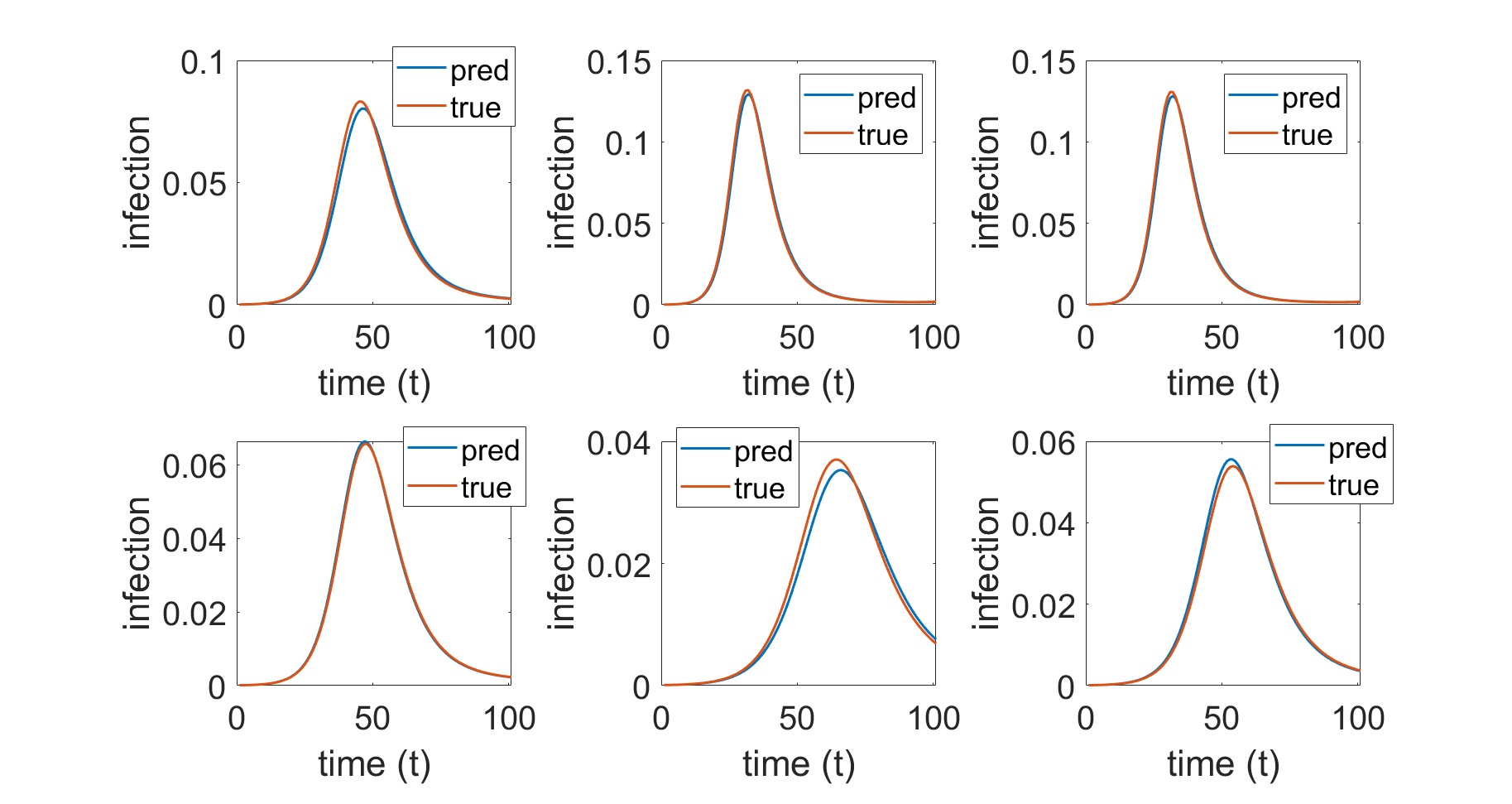}
    \caption{A comparison of the predicted infected time series $\hat{\langle I_i\rangle}(t)$ and their ground truth $\langle I_i\rangle(t)$.}
    \label{fig: time_series}
\end{figure}

Next, we focus on inferring mobility distributions using real-world COVID-19 data~\cite{jiang2024artificial}. The original dataset includes daily new infection counts for Massachusetts and New York from February 29th to September 16th, 2020. We calculate the total number of infected cases by summing the daily new infections with the previously infected cases, adjusted by a decaying factor to account for the recovered cases. In particular, we assume that the infected population $\langle I\rangle (t)$ follows 
\begin{equation} \label{eq: cases}
\langle I\rangle (t) =I_\text{daily}(t)+\exp{(-\gamma)} \langle I\rangle (t-1) ,
\end{equation}
where $I_{daily}(t)$ is the daily infection cases and $\gamma = 0.1345$ is the recovery rate, obtained from fitting an SIR model with the Massachusetts data obtained in~\cite{jiang2024artificial}. 
We scale the input $\langle I \rangle (t)$ by a factor of $50$, which is similar to the treatment of the training set. We show the rescaled time series of the COVID-19 infected population in Figure \ref{fig: real-data-learned}, together with the inferred mobility distributions from this dataset. 

The real-world data is influenced by numerous factors that have drastically altered social mobility during this period, such as social distancing and lockdown policies. Although our model does not account for these factors, the inferred mobility distribution still provides valuable insights. We observe that the inferred distributions include a small portion of high-mobility individuals, which accelerates disease transmission during the early stages of the COVID-19 pandemic. Classical models, which assume a homogeneous population, often fail to distinguish these high-mobility groups, leading to an overestimation of the pandemic size.

\begin{figure}[hpt]
    \centering
    \begin{subfigure}{0.36\textwidth}
        \caption{Real COVID-19 data $\langle I\rangle(t)$}
        \includegraphics[width=\hsize]{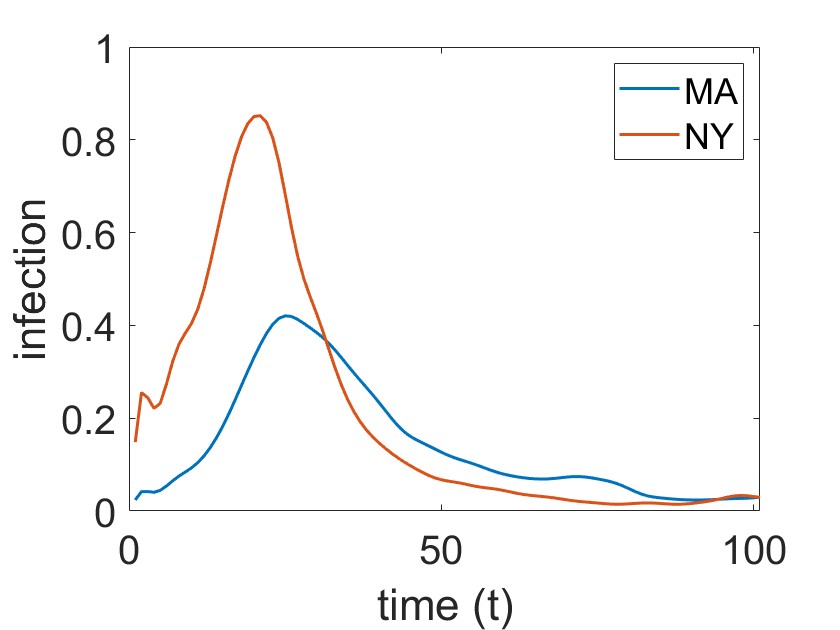}
    \end{subfigure}
    \qquad\qquad
    \begin{subfigure}{0.36\textwidth}
        \caption{Predicted mobility $f(m)$}
        \includegraphics[width=\hsize]{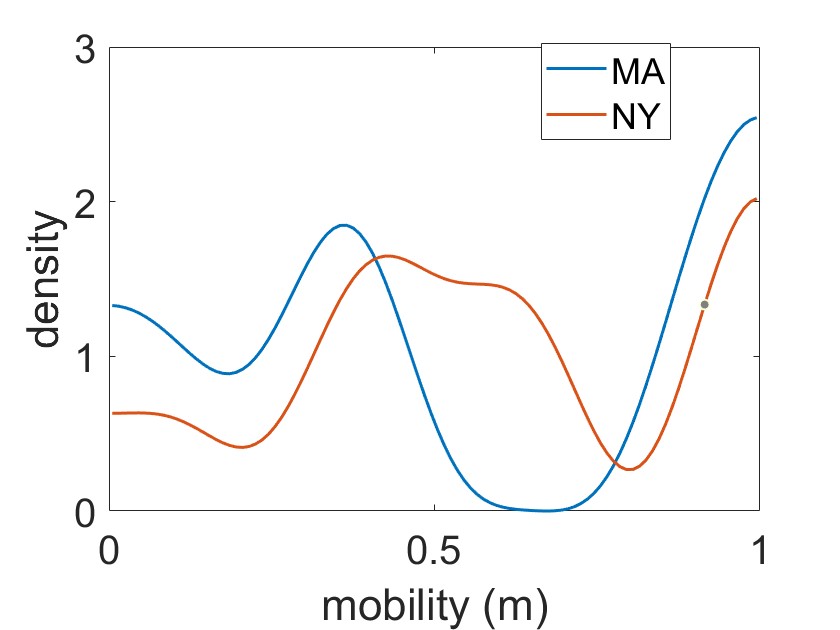}
    \end{subfigure}
    \caption{(a) The COVID-19 infection ratio (scaled by $50$) during the initial wave in Massachusetts and New York from February 29th, 2020, to September 16th, 2020. (b) Inferred mobility distributions from the data in (a). }
    \label{fig: real-data-learned}
\end{figure}

\section{Conclusion and outlook} \label{sec: conclusion}
We introduce and analyze density-based compartmental models in epidemiology that incorporate population heterogeneity through mobility. We assume that each individual has a mobility variable ranging from 0 to 1 and that individuals with higher mobility are more likely to transmit infectious diseases. We describe the disease dynamics by modeling the evolution of mobility distribution functions of the three compartments: susceptible, infected, and recovered. Notably, the classical SIR and SIRS models are special cases of our model when the mobility distributions are Dirac delta functions.

We derive key epidemiological parameters for our mobility-based compartmental models, including the basic reproduction number $\mathcal{R}_0$, the final pandemic size, and mean mobility. Our models effectively address the overestimation issue common in many homogeneous compartmental models when estimating the final pandemic size. 
Early in a pandemic, individuals with higher mobility tend to become infected more rapidly, resulting in a higher initial growth rate of cases compared to a homogeneous population.
As the pandemic evolves, the recovered high-mobility population imparts immunity to the broader population, particularly those with lower mobility. Most existing models neglect this mechanism, while ours captures it, offering a more accurate estimate of the basic reproduction number and the final pandemic size. We prove that the classical SIR model, where the entire population has the same mobility (\ie mobility distributions are Dirac delta functions), yields the largest final pandemic size compared to models with other mobility distributions. Numerical results show that a polarized mobility distribution results in a significantly smaller final pandemic size than the classical SIR model.

As it is often challenging to measure mobility distributions in practice, we formulate an inference problem to estimate the mobility distribution from the ratio of the total infected population. We prove that there is a one-to-one correspondence between the time series of the total infected population and the mobility distribution, which implies that the mobility distribution can be uniquely identified from the infection data. Additionally, we propose a numerical approach using a machine learning framework to infer the mobility distribution from the data and validate our results with real COVID-19 data.

In this paper, we incorporate heterogeneous mobility distributions into the SIR and SIRS models. It is also valuable to integrate mobility aspects into more complex compartmental models, such as those with additional compartments or those accounting for birth and death. Currently, we assume that each individual's mobility is time-independent, but one can relax this assumption to consider mobility dynamics influenced by social opinions, public policies, overall infection cases, and many other factors. Another potentially interesting extension is to combine the mobility-based compartmental models with control problems and explore how managing population mobility can reduce disease spread.

\section*{Acknowledgement} YL and NJ are partially supported by NSF DMS-2108628.

\bibliographystyle{abbrv}
\bibliography{references}

\begin{thebibliography}{10}

\bibitem{anderson1991infectious}
R.~M. Anderson and R.~M. May.
\newblock {\em Infectious diseases of humans: {D}ynamics and control}.
\newblock Oxford University Press, 1991.

\bibitem{burki2022omicron}
T.~K. Burki.
\newblock Omicron variant and booster {COVID-19} vaccines.
\newblock {\em The Lancet Respiratory Medicine}, 10(2):e17, 2022.

\bibitem{chowell2007comparative}
G.~Chowell, H.~Nishiura, and L.~M. Bettencourt.
\newblock Comparative estimation of the reproduction number for pandemic
  influenza from daily case notification data.
\newblock {\em Journal of the Royal Society Interface}, 4(12):155--166, 2007.

\bibitem{clarke2022seroprevalence}
K.~E. Clarke.
\newblock Seroprevalence of infection-induced {SARS-CoV-2} antibodies--{United
  States, September 2021--February 2022}.
\newblock {\em MMWR. Morbidity and Mortality Weekly Report}, 71, 2022.

\bibitem{covid2021modeling}
I.~Covid and F.~Team.
\newblock Modeling covid-19 scenarios for the united states.
\newblock {\em Nature medicine}, 27(1):94, 2021.

\bibitem{d2020assessment}
M.~D'Arienzo and A.~Coniglio.
\newblock Assessment of the {SARS-CoV-2} basic reproduction number, ${R_0}$,
  based on the early phase of {COVID-19} outbreak in {Italy}.
\newblock {\em Biosafety and Health}, 2(2):57--59, 2020.

\bibitem{diekmann2010construction}
O.~Diekmann, J.~Heesterbeek, and M.~G. Roberts.
\newblock The construction of next-generation matrices for compartmental
  epidemic models.
\newblock {\em Journal of the Royal Society Interface}, 7(47):873--885, 2010.

\bibitem{diekmann1990definition}
O.~Diekmann, J.~A.~P. Heesterbeek, and J.~A. Metz.
\newblock On the definition and the computation of the basic reproduction ratio
  ${R_0}$ in models for infectious diseases in heterogeneous populations.
\newblock {\em Journal of Mathematical Biology}, 28:365--382, 1990.

\bibitem{farboodi2021internal}
M.~Farboodi, G.~Jarosch, and R.~Shimer.
\newblock Internal and external effects of social distancing in a pandemic.
\newblock {\em Journal of Economic Theory}, 196:105293, 2021.

\bibitem{ferguson2005strategies}
N.~M. Ferguson, D.~A. Cummings, S.~Cauchemez, C.~Fraser, S.~Riley, A.~Meeyai,
  S.~Iamsirithaworn, and D.~S. Burke.
\newblock Strategies for containing an emerging influenza pandemic in southeast
  asia.
\newblock {\em Nature}, 437(7056):209--214, 2005.

\bibitem{ferguson2020report}
N.~M. Ferguson, D.~Laydon, G.~Nedjati-Gilani, N.~Imai, K.~Ainslie, M.~Baguelin,
  S.~Bhatia, A.~Boonyasiri, Z.~Cucunub{\'a}, G.~Cuomo-Dannenburg, et~al.
\newblock {\em Report 9: {I}mpact of non-pharmaceutical interventions {(NPIs)}
  to reduce {COVID19} mortality and healthcare demand}, volume~16.
\newblock Imperial College London London, 2020.

\bibitem{funk2010modelling}
S.~Funk, M.~Salath{\'e}, and V.~A. Jansen.
\newblock Modelling the influence of human behaviour on the spread of
  infectious diseases: {A} review.
\newblock {\em Journal of the Royal Society Interface}, 7(50):1247--1256, 2010.

\bibitem{goodfellow2016deep}
I.~Goodfellow, Y.~Bengio, and A.~Courville.
\newblock {\em Deep learning}.
\newblock MIT press, 2016.

\bibitem{holmdahl2020wrong}
I.~Holmdahl and C.~Buckee.
\newblock Wrong but useful—what {COVID-19} epidemiologic models can and
  cannot tell us.
\newblock {\em New England Journal of Medicine}, 383(4):303--305, 2020.

\bibitem{jiang2024artificial}
N.~Jiang, C.~Kolozsvary, and Y.~Li.
\newblock Artificial neural network prediction of {COVID-19} daily infection
  count.
\newblock {\em Bulletin of Mathematical Biology}, 86(5):49, 2024.

\bibitem{ke2021estimating}
R.~Ke, E.~Romero-Severson, S.~Sanche, and N.~Hengartner.
\newblock Estimating the reproductive number ${R_0}$ of {SARS-CoV-2 in the
  United States and eight European} countries and implications for vaccination.
\newblock {\em Journal of Theoretical Biology}, 517:110621, 2021.

\bibitem{keeling2011modeling}
M.~J. Keeling and P.~Rohani.
\newblock {\em Modeling infectious diseases in humans and animals}.
\newblock Princeton University Press, 2011.

\bibitem{kermack1927contribution}
W.~O. Kermack and A.~G. McKendrick.
\newblock A contribution to the mathematical theory of epidemics.
\newblock {\em Proceedings of the Royal Society of London. Series A, Containing
  Papers of A Mathematical and Physical Character}, 115(772):700--721, 1927.

\bibitem{liu2022effective}
Y.~Liu and J.~Rockl{\"o}v.
\newblock The effective reproduction number for the omicron {SARS-CoV-2}
  variant of concern is several times higher than {Delta}.
\newblock {\em Journal of Travel Medicine}, 29(3):taac037, 2022.

\bibitem{lloyd2005superspreading}
J.~O. Lloyd-Smith, S.~J. Schreiber, P.~E. Kopp, and W.~M. Getz.
\newblock Superspreading and the effect of individual variation on disease
  emergence.
\newblock {\em Nature}, 438(7066):355--359, 2005.

\bibitem{nishiura2011did}
H.~Nishiura, G.~Chowell, and C.~Castillo-Chavez.
\newblock Did modeling overestimate the transmission potential of pandemic
  {(H1N1-2009)}? {Sample} size estimation for post-epidemic seroepidemiological
  studies.
\newblock {\em PLoS One}, 6(3):e17908, 2011.

\bibitem{qiu2018global}
Z.~Qiu, M.~Y. Li, and Z.~Shen.
\newblock Global dynamics of an infinite dimensional epidemic model with
  nonlocal state structures.
\newblock {\em Journal of Differential Equations}, 265(10):5262--5296, 2018.

\bibitem{shaman2012forecasting}
J.~Shaman and A.~Karspeck.
\newblock Forecasting seasonal outbreaks of influenza.
\newblock {\em Proceedings of the National Academy of Sciences},
  109(50):20425--20430, 2012.

\bibitem{thieme2009spectral}
H.~R. Thieme.
\newblock Spectral bound and reproduction number for infinite-dimensional
  population structure and time heterogeneity.
\newblock {\em SIAM Journal on Applied Mathematics}, 70(1):188--211, 2009.

\bibitem{thieme2011global}
H.~R. Thieme.
\newblock Global stability of the endemic equilibrium in infinite dimension:
  Lyapunov functions and positive operators.
\newblock {\em Journal of Differential Equations}, 250(9):3772--3801, 2011.

\bibitem{viboud2018rapidd}
C.~Viboud, K.~Sun, R.~Gaffey, M.~Ajelli, L.~Fumanelli, S.~Merler, Q.~Zhang,
  G.~Chowell, L.~Simonsen, A.~Vespignani, et~al.
\newblock The {RAPIDD} ebola forecasting challenge: {S}ynthesis and lessons
  learnt.
\newblock {\em Epidemics}, 22:13--21, 2018.

\bibitem{wu2022emergence}
Y.~Wu, Y.~Long, F.~Wang, W.~Liu, and Y.~Wang.
\newblock Emergence of {SARS-CoV-2 Omicron} variant and strategies for tackling
  the infection.
\newblock {\em Immunity, Inflammation and Disease}, 10(12):e733, 2022.

\bibitem{yang2024epidemical}
S.~Yang, W.~Chu, and P.~Kevrekidis.
\newblock An epidemical model with nonlocal spatial infections.
\newblock {\em arXiv preprint arXiv:2407.07538}, 2024.

\bibitem{zhang2020estimation}
S.~Zhang, M.~Diao, W.~Yu, L.~Pei, Z.~Lin, and D.~Chen.
\newblock Estimation of the reproductive number of novel coronavirus
  {(COVID-19)} and the probable outbreak size on the {Diamond Princess} cruise
  ship: {A} data-driven analysis.
\newblock {\em International Journal of Infectious Diseases}, 93:201--204,
  2020.

\end{thebibliography}

\end{document}